\def\calL{{\mathcal L}}   
\def\calP{{\mathcal P}}
\def\calQ{{\mathcal Q}}      
\title{Wireless Capacity and Admission Control in Cognitive Radio}
\author[M. M. Halld\'orsson]{Magn\'us M. Halld\'orsson}
\address[M. M. Halld\'orsson]{School of Computer Science\\
 Reykjavik University\\
 101 Reykjavik, Iceland\\}
\email{mmh@ru.is}
\author[P. Mitra]{Pradipta Mitra}
\address[P. Mitra]{School of Computer Science\\
Reykjavik University\\
Reykjavik 101, Iceland}
\email{ppmitra@gmail.com}
\begin{document}

\begin{abstract}
We give algorithms with constant-factor performance guarantees for
several capacity and throughput problems in the SINR model.  The
algorithms are all based on a novel LP formulation for capacity
problems.  First, we give a new constant-factor approximation
algorithm for selecting the maximum subset of links that can be
scheduled simultaneously, under any non-decreasing and sublinear power assignment.
For the case of uniform power, we extend this to the case of variable QoS requirements and link-dependent
noise terms. 
Second, we approximate a problem related to cognitive radio: find a maximum set of links that can be simultaneously scheduled without affecting a given set of previously assigned links.
Finally, we obtain constant-factor approximation of weighted capacity under linear power assignment.

\end{abstract}

\maketitle 
\section{Introduction}
How much communication can be active simultaneously in a given wireless network?
This is a topic of major research effort. We address this question in
a more generalized setting than previously considered, and give
efficient algorithms that achieve good performance guarantees based on
a novel mathematical programming formulation. 

In the \emph{capacity} problem in wireless networks, we are given a
set of communication links in a metric space, each consisting of
a sender-receiver pair, and we seek to find the largest subset of links
that can transmit simultaneously within the model of interference.  We
adopt the SINR model of interference where transmission over a link
succeeds if the received signal at the receiver is sufficiently large,
compared to ambient noise and interference from other transmissions.
This model has emerged as a superior model for wireless interference
patterns, as it is both analytically manageable, and reasonably
realistic, especially in comparison to graph based
models~\cite{GronkMibiHoc01,MaheshwariJD08,Moscibroda2006Protocol}.
We assume that the powers have been pre-assigned to the links, based
only on the length of the links. Having such simple assignments can be of great benefit
in a distributed context.

The basic capacity problem has been addressed in numerous recent
works. Constant-factor approximation algorithms have been given for
uniform power \cite{GHWW09} and more generally for any non-decreasing
sub-linear power assignment (see Sect.~2 for definitions) \cite{SODA11}, and for arbitrary power 
\cite{KesselheimSoda11}. 
These results assume a uniformity of the links, both in their signal
characteristics as well as their value. They also assume that no other
wireless activity is affecting these transmissions. 
We aim to handle more general scenarios, allowing for heterogeneity in
link characteristics and environment.
In particular, we address three extensions:
\begin{enumerate}
 \item (QoS) Each link has its own signal requirements and its own
   ambient noise term.
 \item (Weights) Each link has an associated weight, and the
   objective is to maximize the total weight of the satisfied links.
 \item (Admission control) Certain communication is already taking
   place, which cannot be interfered with (possibly for regulatory reasons).
\end{enumerate}
We discuss each of these extensions further.

\subsection*{Quality-of-Service requirements}

The SINR achieved at a particular link determines the data-rate achieved at this link, or the quality of service (QoS). Different links may have a minimum acceptable QoS requirements, for example if one link is used for video transmission and another for data transmission. In addition, the noise level at receivers may not be the same across the network. This practically motivated version of the capacity problem has not been handled in much of the previous research \cite{GHWW09,HW09,SODA11}.
We tackle this problem, both as an interesting problem in its own right, and as a stepping stone for the following problem.

\subsection*{Cognitive radio and admission control}
Given are two sets of links $\calP$ and $\calL$. The goal is to find  $\calQ \subseteq \calL$ such that $\calP \cup \calQ$ can transmit simultaneously and the size of $\calQ$ is maximized.  We refer to this as the \emph{admission control} problem.

This problem naturally arises in at least two application areas. The first is the so-called ``cognitive radio'', which has been the object of intense study of late (see \cite{Bahl:2009:WSN:1592568.1592573,SuZhangCognitive,Levorato:2009:CIM:1793974.1793991} and their \emph{many} references). This area has gained great salience due to recent 
regulatory changes in wireless bandwidth management. Though the exact technological scenario for cognitive radios is still being figured out, the essential point is as follows: a wireless channel is allocated to a ``primary user''; however, one would like to accommodate more users in the channel, as long as the primary user remains feasible. 
This clearly is an instance of the above mentioned problem, with  $\calP$ being typically small (in fact, perhaps, just $1$).

However, there is a more ``classical'' source of the same problem, referred to as admission control or \emph{access control} \cite{GoldsmithSurvey,WuBertsekas}, sometimes referred to as ``active link protection'' \cite{ChiangSurvey}. The capacity problem, is its basic form, captures a scenario where each slot is independent of previous slots. In practice however, links can require sustained communication (and different links for different periods of time). Thus, in certain applications, a more realistic model is to maximize capacity under the constraint that older links still communicating not be disturbed. This again, is exactly the problem defined above (but perhaps with a typically larger $\calP$). Though heuristic approaches to this problem abound, we are unaware of rigorous algorithmic
results in the SINR model.

\subsection*{Weighted capacity}
In this problem each link $v$ is associated with a non-negative weight $w_v$ and the goal is to find a feasible
set $OPT$ so as to maximize $\sum_{v \in OPT}  w_v$.

This \emph{weighted capacity} problem is a natural extension of the capacity problem, and a case can
be made for theoretical investigation for this reason alone.
As it happens, though, the problem is further motivated by questions about \emph{stability} in queuing theory. 
In this setting, packets arrive at network nodes according to some stochastic process, and the problem is
to characterize the set of arrival rates under which the network can be stabilized, i.e., the network queues
remain bounded. In the case of 
wireless networking stability, the seminal work of Tassiulas and Ephremides \cite{TE92} 
established
the existence policy that
stabilizes the system under all arrival rates for which stability is potentially possible. 
This policy can be seen to be equivalent to solving the maximum
weighted capacity problem in the SINR model. 

\subsection*{Solution method}

It is easy to verify whether a given set of links is feasible. 
In fact, an appropriate power assignment that makes it feasible can be found efficiently. Namely, Eqn. \ref{gen_sinr} can be cast as a linear program with $P_v$'s as variables, 
which can thus be solved optimally. Indeed, there is a large body of work where one starts with a feasible set and then tries to optimize over some other criteria, say to minimize the power consumed \cite{ChiangSurvey}. 

Naturally, one doesn't expect this approach to work directly for the capacity problem introduced before, which is  ``combinatorial'' and in fact happens to be NP-hard \cite{Goussevskaia2008Complexity}.
What is perhaps more surprising is that the capacity problem does not appear to easily admit a linear
programming relaxation either, even for simple cases. Most algorithms developed for the capacity problem have thus been very simple greedy algorithms \cite{GHWW09,SODA11,KesselheimSoda11}, with some exceptions \cite{hoeferspaa,CKMPS08}.

In this work, starting from a simple observation, we develop an integer program that approximates the capacity problem for a large class of oblivious power assignments.
We then show how to round the corresponding linear programming relaxation to get a constant factor approximation.
Thus we recover the main result of \cite{SODA11} but via linear programming as opposed to a greedy algorithm. 
We also show that the LP formulation can be easily modified to tackle a class of important problems where
greedy algorithms do not appear to work very well, including the
problems discussed above.

\section{Preliminaries and results}
\label{sec:model}

The capacity problem in the SINR model is defined as follows.
We are given a set $L$ of $n$ links, 
each consisting of a sender and receiver pair $(s_v, r_v)$, 
which are points in a metric space with a distance metric $d$.
The asymmetric distance from link $w$ to link $v$ is the distance from
$w$'s sender to $v$'s receiver, denoted $d_{wv} = d(s_w, r_v)$.
Each link $v$ has been assigned transmission power $P_v$. 
A link $v$ succeeds if
\begin{equation}
\frac{P_v/d_{vv}^{\alpha}}{N + \sum_{w \in S \setminus \{v\}} P_w/d_{wv}^{\alpha}} \geq \beta\ ,
\label{gen_sinr}
\end{equation}
where $N$ is the ambient \emph{noise}, $\beta$ is the required SINR level,
$\alpha > 0$ is the \emph{path loss} constant, and 
$S \ni v$  is the set of concurrent transmissions. A set $S$ is \emph{feasible} if the above constraint holds for all $v \in S$. Thus the capacity problem is equivalent to finding
the feasible subset $S \subseteq L$ of maximum size.

Let $\ell_v \equiv d_{vv}$ denote the length of link $v$.
Let $\Delta$ denote the ratio between the maximum and minimum length of a link.
A power assignment $P$ is \emph{non-decreasing} if $P_v \ge P_w$ whenever
$\ell_v \ge \ell_w$ and \emph{sub-linear} if $\frac{P_v}{\ell_v^{\alpha}} \le \frac{P_w}{\ell_w^{\alpha}}$ whenever $\ell_v \ge \ell_w$. We will restrict our attention to this class or particular assignments belonging to this class. 
Note that this class essentially contains all ``natural'' length based assignments, and specifically all
 well studied length based power assignments. These include the \emph{uniform} power assignment, where all links use the same power, \emph{linear} power assignment where $P_v = \ell_v^{\alpha}$ (which is thought to be energy efficient), and \emph{mean power} assignment where $P_v = \ell_v^{\alpha/2}$ which is known to be 
essentially the ``best'' length-based assignment as far as capacity is concerned \cite{us:esa09full,SODA11}.

\emph{Affectance. }
We will use the notion of \emph{affectance}, introduced in
\cite{GHWW09,HW09} and refined in \cite{KV10} to the thresholded form
used here, which has a number of
technical advantages.  
The affectance $a^P_w(v)$ \emph{on} link $v$ \emph{from} another link $w$,
with a given power assignment $P$,
is the interference of $w$ on $v$ relative to the power
received, or
\[
a^P_{w}(v)      
     = \min\left\{1, c_v \frac{P_w}{P_v} \cdot \left(\frac{\ell_v}{d_{wv}}\right)^\alpha\right\},
\]
where $c_v = \beta/(1 - \beta N_v \ell_v^\alpha/P_v)$ is a constant
depending only on the parameters of the link $v$. 

We will drop $P$ when clear from context. 
Let $a_v(v) = 0$.
For a set $S$ of links and a link $v$, 
let $a_v(S) = \sum_{w \in S} a_v(w)$ and $a_S(v) = \sum_{w \in S} a_w(v)$.
For sets $S$ and $R$, $a_R(S) = \sum_{v \in R}\sum_{u \in S} a_v(u)$.
Using such notation, Eqn.~\ref{gen_sinr} can be rewritten as follows, which we will adopt:
\begin{equation}
a_S(v) \leq 1
\end{equation}
In the variable QoS version of the capacity problem, $\beta$ and $N$ are no longer constants, but can be different for different links.
Note that the definition of affectance stays the same apart from a changed definition of $c_v = \beta_v/(1 - \beta_v N_v \ell_v^\alpha/P_v)$ where $\beta_v$ and $N_v$ are respectively the signal requirement and noise level for $v$.

For all problems that we consider, we will $OPT$ to mean the optimal solution, which will apply to the problem being discussed at that point.

\subsection*{Our results}
We prove the following results.
\begin{theorem}
For length monotone, sub-linear power assignments, there is constant-approximation algorithm for the wireless capacity problem. For uniform power, there is a constant-approximation algorithm in the QoS generalization.
\label{mainth1}
\end{theorem}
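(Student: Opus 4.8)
The plan is to build an integer program (IP) that captures feasibility via the affectance formulation $a_S(v) \le 1$, and then argue that its linear relaxation has a small integrality gap. The central difficulty is that the ``obvious'' constraint set — one inequality $\sum_{w \in S} a_w(v) \le 1$ per link $v$ — does not by itself yield a bounded integrality gap, because fractionally one can pack many weakly-affecting links onto a single receiver while the true feasible set is much smaller. The fix, following the slack/robustness ideas in the constant-factor literature (\cite{GHWW09,SODA11}), is to use a \emph{robust} feasibility notion: ask that every link $v$ in the selected set has affectance bounded by a constant $1/(2p)$ for a suitable $p$, and exploit that any set satisfying the weak constraint can be partitioned (via a ``signal-strengthening'' argument) into $O(1)$ robustly feasible subsets, so $OPT$ changes only by a constant factor.

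First I would fix parameters and set up the LP: a variable $x_v \in [0,1]$ for each link, the objective $\max \sum_v x_v$, and for each link $v$ a constraint $\sum_{w} a_w(v)\, x_w \le 1$, together with the symmetric ``outgoing'' constraints $\sum_w a_v(w)\, x_w \le 1$ that one gets for free from feasibility of $OPT$ (these are what make the rounding work for sub-linear, length-monotone power — they control how much a short link can be affected \emph{and} how much it affects others, which is the asymmetry that plagues SINR). One then observes $OPT$ is a feasible integral point, so the LP optimum is at least $|OPT|$. Second, solve the LP and round: I would use a randomized rounding where each $v$ is picked independently with probability proportional to $x_v$ (scaled down by a constant), then apply a Markov/alteration step — delete any link whose realized incoming affectance exceeds the robust threshold. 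By linearity of expectation the expected incoming affectance at a surviving link is $O(1)$ times its LP budget, so by Markov a constant fraction of the mass survives; the retained set is then robustly feasible, hence feasible. The expected size of the output is $\Omega(1)\cdot \sum_v x_v \ge \Omega(1) \cdot |OPT|$, giving the constant-factor guarantee; this can be derandomized by the method of conditional expectations if desired.

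The QoS extension for uniform power is handled by noting that the entire development only used affectance through the inequalities $a_S(v)\le 1$ and the two LP constraint families, and the sole change in the QoS setting is the replacement of $c_v = \beta/(1-\beta N \ell_v^\alpha/P_v)$ by the link-dependent $c_v = \beta_v/(1-\beta_v N_v \ell_v^\alpha/P_v)$. Since $c_v$ enters $a_w(v)$ as a per-receiver multiplicative constant, all the counting arguments — in particular the geometric/packing lemmas that bound $\sum_w a_w(v)$ over nearby same-length links for \emph{uniform} power — go through verbatim with $c_v$ in place of $c$, because the argument is done ``at receiver $v$'' and never compares $c_v$ to $c_u$ for $u \ne v$. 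The main obstacle I anticipate is precisely the first step: proving that a set with $a_S(v)\le 1$ for all $v$ decomposes into $O(1)$ robustly feasible subsets under an arbitrary non-decreasing sub-linear power assignment — this is where the length-monotonicity and sub-linearity are genuinely needed, and where one must invoke (or re-prove) the structural ``blue-dominant / red-dominant'' or signal-strengthening lemma from \cite{SODA11}. Everything after that — writing the LP, checking $OPT$ feasibility, randomized rounding with alteration — is routine.
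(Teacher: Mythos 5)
The central gap is in your claim that ``$OPT$ is a feasible integral point, so the LP optimum is at least $|OPT|$.'' With the constraints as you wrote them --- $\sum_w a_w(v)\,x_w \le 1$ and $\sum_w a_v(w)\,x_w \le 1$ imposed \emph{for every link} $v$ --- this is false: feasibility of $OPT$ only bounds $a_{OPT}(v)$ for $v \in OPT$, while a link $v \notin OPT$ may sit next to many senders of $OPT$ and receive affectance as large as $|OPT|$. (The outgoing constraints are not ``for free'' either: feasibility controls incoming, not outgoing, affectance, so one must first pass to a $2$-bi-feasible subset of $OPT$, losing a constant factor.) This is precisely the obstruction the paper flags at the start of Section~3: a per-link constraint cannot be made to ``apply only to the selected links.'' The paper's resolution, which your proposal lacks, is twofold: (i) restrict each constraint to \emph{longer} links, i.e.\ $\sum_{v:\,\ell_v\ge\ell_u} a_v(u)\,\delta_v \le C$ and $\sum_{v:\,\ell_v\ge\ell_u} a_u(v)\,\delta_v \le C$ with a large constant $C$ on the right; and (ii) invoke Lemmas~\ref{cl2} and~\ref{cl3}, which say that for any $\gamma$-feasible (resp.\ $\gamma$-anti-feasible) set $L'$ under a length-monotone sublinear assignment and \emph{any} link $u$ no longer than the links of $L'$ --- not necessarily a member of $L'$ --- one has $a_{L'}(u)=O(\gamma)$ (resp.\ $a_u(L')=O(\gamma)$). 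These are what make the constraints valid for all links, including those outside the solution; they are distinct from the signal-strengthening decomposition you cite, which is only needed at the very end to clean up the rounded set. Your ``robust feasibility'' fix points the wrong way: shrinking the right-hand side to $1/(2p)$ makes it even harder for a subset of $OPT$ to satisfy the constraints, and the integrality gap in the fractional direction was never the primary difficulty.

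A second, smaller gap concerns the QoS extension. The proof of the in-affectance bound does compare constants across links (it uses $\ell_u\le\ell_v \Rightarrow c_u\le c_v$), and this monotonicity breaks once $\beta$ and $N$ are link-dependent; so your assertion that the argument ``never compares $c_v$ to $c_u$'' is only true for the \emph{outgoing} direction. This is exactly why the paper, for uniform power with variable QoS, drops the incoming constraints and keeps only $\sum_{v\ne u} a_u(v)\,\delta_v\le C$, relying on the anti-feasibility lemma (Lemma~\ref{cl2uni}) whose proof involves only the receiver's own constant. Your rounding and final-selection steps (independent rounding, Markov-based alteration, averaging plus signal strengthening) essentially match the paper's, so once the relaxation is repaired along the above lines the remainder of your outline goes through.
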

The first part (not involving QoS) is the same as the main result proven in \cite{SODA11}, but via a linear programming relaxation.

\begin{theorem}
For the admission control problem with uniform power,
\begin{itemize}
\item[a)] There is a $O(|\calP|)$ approximation algorithm. 
\item[b)] If the optimum solution $|OPT| > \gamma_1 |\calP| \sqrt{\log  |\calP|}$ (for some constant $\gamma_1$), there is a constant-approximation algorithm.
\end{itemize}
\label{mainth2}
\end{theorem}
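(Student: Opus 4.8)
The plan is to reduce admission control to the LP behind Theorem~\ref{mainth1}. I work under the standard convention that $\calP$ — and the optimal solution together with $\calP$ — is feasible with a constant factor of slack: $a_\calP(p)\le 1-\rho$ for a fixed constant $\rho>0$ and every $p\in\calP$ (equivalently, $\calP$ meets the SINR threshold $\Theta(\beta)$), so that an amount $\rho$ of affectance is free at each receiver $r_p$. Discard from $\calL$ every link $v$ with $a_\calP(v)>1-\rho$; no link of $OPT$ is lost. Then form the LP that maximizes $\sum_{v\in\calL} x_v$ subject to the capacity-LP constraints of Theorem~\ref{mainth1} on $\calL$, \emph{together with} one extra constraint per primary link, $\sum_{v\in\calL} a_v(p)\,x_v\le\rho/3$ for each $p\in\calP$. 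Since $OPT\cup\calP$ is feasible with slack, the incidence vector of $OPT$ is LP-feasible, so the LP value is at least $|OPT|$. Applying the rounding of Theorem~\ref{mainth1} (in its randomized form, so that each $v$ is taken with probability $O(x_v)$) to the capacity constraints yields $\calQ_0\subseteq\calL$ that is $O(1)$-feasible, has $|\calQ_0|=\Omega(|OPT|)$, and has $\mathbb E[a_{\calQ_0}(p)]=O(\rho)$ for every $p$, hence $\mathbb E\big[\sum_{p\in\calP} a_{\calQ_0}(p)\big]=O(|\calP|)$.

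\emph{Part (a).} Subsample $\calQ_0$, keeping each link independently with probability $\theta=\Theta(1/|\calP|)$ chosen small enough that $\mathbb E\big[\sum_p a_\calQ(p)\big]\le\rho/2$. By Markov, $\sum_p a_\calQ(p)\le\rho$ with probability at least $1/2$; by a Chernoff lower tail, $|\calQ|=\Omega(|OPT|/|\calP|)$ with high probability. On the intersection, $a_\calQ(p)\le\rho$ for \emph{all} $p$ at once, while $\calQ\subseteq\calQ_0$ stays $O(1)$-feasible with every link tolerating $\calP$; thus $\calP\cup\calQ$ is feasible and $|\calQ|=\Omega(|OPT|/|\calP|)$, the desired $O(|\calP|)$-approximation. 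Conditional expectations derandomize it.

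\emph{Part (b).} When $|OPT|>\gamma_1|\calP|\sqrt{\log|\calP|}$, the $1/|\calP|$ sampling rate throws away too much and we must keep a \emph{constant} fraction of $\calQ_0$. Before rounding, prune $\calL$ to the links $v$ with $a_v(p)\le\epsilon$ for \emph{every} $p\in\calP$, where $\epsilon=\Theta(1/\sqrt{\log|\calP|})$. Because $a_{OPT}(p)\le\rho$, at most $O(1/\epsilon)=O(\sqrt{\log|\calP|})$ links of $OPT$ are pruned per primary link, so the pruned LP still has value $\ge|OPT|-O(|\calP|\sqrt{\log|\calP|})=\Omega(|OPT|)$ once $\gamma_1$ is a large enough constant. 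Now round the LP keeping a constant fraction of its mass (randomized rounding at rate $\Theta(x_v)$, or the Theorem~\ref{mainth1} rounding): after pruning, each coefficient $a_v(p)$ of a primary constraint is at most $\epsilon$, so a concentration bound holds $a_\calQ(p)$ below $\rho$, and the union bound over the $|\calP|$ primary constraints goes through precisely when $\epsilon$ is a small constant times $1/\sqrt{\log|\calP|}$; balanced against the pruning cost $|\calP|/\epsilon$, this is exactly what pins the threshold at $|\calP|\sqrt{\log|\calP|}$. Combined with the $O(1)$-feasibility from the capacity part, $\calP\cup\calQ$ is feasible and $|\calQ|=\Omega(|OPT|)$.

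The crux — and the step I expect to be hardest — is this last rounding: simultaneously controlling all $|\calP|$ primary constraints while retaining a constant fraction of the links. A plain Chernoff bound would force $\epsilon$ down to $\Theta(1/\log|\calP|)$, reaching only the weaker regime $|OPT|\gg|\calP|\log|\calP|$; recovering $\sqrt{\log|\calP|}$ needs the sub-Gaussian savings that pruning the primary coefficients buys, plus a check that the chosen rounding coexists with the (combinatorially structured, possibly numerous) capacity constraints of Theorem~\ref{mainth1} — most cleanly because that rounding can be realized as a negatively correlated randomized rounding, to which linear concentration bounds still apply.
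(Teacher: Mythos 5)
Your proposal follows essentially the same route as the paper for both parts, so I will mainly record where the two diverge. For part (b) the match is nearly exact: you prune links whose affectance on some primary exceeds $\epsilon=\Theta(1/\sqrt{\log|\calP|})$, charge the pruning against $a_{OPT}(\calP)\le|\calP|$ to lose only $O(|\calP|\sqrt{\log|\calP|})=O(|OPT|)$ links (the paper's Lemma \ref{lem:saffect}), add one LP constraint per primary link, and close with bounded-coefficient concentration plus a union bound over the $|\calP|$ primary constraints (the paper's Lemma \ref{lem:sparsify1}, via Hoeffding); your remark about why a plain Chernoff bound would only reach $\epsilon=\Theta(1/\log|\calP|)$ is exactly the calculus that pins the paper's threshold. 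Two differences are worth noting. First, to keep the secondary links alive in the presence of $\calP$, the paper folds the interference from $\calP$ into a link-dependent noise term $N_u$ and works with the resulting affectance $\hat a$ --- this is precisely why the QoS generalization of Theorem \ref{mainth1} is proved first --- whereas you discard links with $a_\calP(v)>1-\rho$ and reserve a budget $\rho$; both devices work, though yours obliges the final signal-strengthening step to push $\calQ$'s internal affectance below $\rho$ rather than $1$, which you should say explicitly.

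The one genuine slip is in part (a): your per-primary constraint $\sum_{v}a_v(p)x_v\le\rho/3$ is \emph{not} satisfied by the incidence vector of $OPT$, since feasibility of $OPT\cup\calP$ only guarantees $a_{OPT}(p)\le 1$ for each $p\in\calP$, not $a_{OPT}(p)\le\rho/3$; so the claim that the LP value is at least $|OPT|$ fails as written. The paper sidesteps this by imposing only the aggregate constraint $\sum_{p\in\calP}\sum_{v}\hat a_v(p)\delta_v\le|\calP|$ (which $OPT$ does satisfy) and then partitioning the rounded set into $O(|\calP|)$ classes, each contributing affectance at most $1$ to every primary, keeping the largest class. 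Your version is repaired by relaxing the right-hand side to $1$ and absorbing the resulting factor into the $\Theta(1/|\calP|)$ subsampling rate (and handling the degenerate case $|OPT|\lesssim|\calP|$, where the Chernoff lower tail on $|\calQ|$ gives nothing and one must still output a single link); after that fix the two arguments for part (a) coincide up to constants.
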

Specifically, for the ``cognitive radio'' case of the problem where $\calP = 1$ (or small, at any rate) we get a constant factor
approximation for uniform power. There is no straight-forward greedy algorithm to tackle this problem.
We believe that a greedy algorithm for the variable QoS problem is possible, but even if this is true,
the resultant version for admission control would result in approximation factor worse than our results by a $O(\log n)$ factor. Additionally, we see no way of utilizing the $OPT > |\calP| \sqrt{\log |\calP|}$ condition in the greedy algorithm.

Finally,
\begin{theorem}
For linear power, there is  constant-approximation algorithm for weighted capacity problem.
\label{mainth3}
\end{theorem}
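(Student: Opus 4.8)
The plan is to reuse the linear-programming template developed for Theorem~\ref{mainth1}: build an integer program whose $0/1$ solutions are, up to constant factors, exactly the feasible link sets; pass to its LP relaxation; solve it; and round. The passage from unweighted to weighted capacity costs nothing in the formulation, since only the objective changes, from $\max\sum_v x_v$ to $\max\sum_v w_v x_v$; all the work is in the constraints and the rounding. This is precisely the setting where an LP is the natural tool: a greedy rule that scans the links in some length- or affectance-based order (as in \cite{GHWW09,SODA11,KesselheimSoda11}) has no mechanism to protect a single heavy link against a crowd of light ones, whereas the linear objective is handled transparently by the LP and, as sketched below, by the rounding as well.

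The reason the result is confined to linear power is that this is the assignment under which the constraint system is simultaneously a valid relaxation and roundable. Under $P_v=\ell_v^\alpha$ (with uniform $\beta,N$) the per-link constant $c_v$ collapses to the genuine constant $c=\beta/(1-\beta N)$, and the affectance becomes $a_w(v)=\min\{1,\,c\,(\ell_w/d_{wv})^\alpha\}$, depending on the interferer $w$ only through its sender $s_w$ and its length $\ell_w$. Thus each link $w$ effectively occupies a ball of radius $\Theta(\ell_w)$ around $s_w$, and the interference it inflicts on any receiver is governed by that ball. This sender-ball picture yields the density bounds we need: in a feasible set the balls of links of comparable length overlap boundedly, so for each link $v$, each endpoint $p\in\{s_v,r_v\}$ of $v$, and each dyadic scale, the number of feasible links whose sender lies near $p$ at that scale and whose length is correspondingly bounded is controlled. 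I would take exactly these ball-and-scale inequalities (at both endpoints of every link, over the relevant scales, plus a noise constraint) as the LP constraints, and then verify the two things a relaxation must satisfy: that the indicator vector of \emph{every} feasible set of the instance --- in particular $OPT$ --- obeys them, so that the LP optimum is at least $\sum_{v\in OPT}w_v$; and that the system is polynomially bounded, or at least admits a polynomial separation oracle.

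For the rounding, given an optimal fractional solution $x^*$, I would scale it down by a suitable constant $\kappa$ and form a random set $R$ by including each link $v$ independently with probability $x^*_v/\kappa$, then run a single cleanup pass deleting every $v\in R$ whose realized incoming affectance $a_R(v)$ exceeds $1$. By linearity of expectation and the LP constraint at $v$,
\[
 \mathbb{E}\bigl[a_R(v)\mid v\in R\bigr]=\tfrac1\kappa\sum_{w}a_w(v)\,x^*_w = O(1/\kappa),
\]
so Markov's inequality gives that $v$ survives with probability $1-O(1/\kappa)$ conditioned on $v\in R$; hence the surviving set has expected weight $\Omega\bigl(\sum_{v\in OPT}w_v\bigr)$. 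Because affectance is monotone under taking subsets, once the overloaded links are removed every remaining $v$ has $a_{R'}(v)\le a_R(v)\le 1$, so $R'$ is genuinely feasible; repeating a few times (or a standard derandomization) makes the guarantee deterministic.

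The main obstacle is the tension between the two demands on the constraint system: it must be \emph{strong} enough that the Markov step controls not only the few nearby interferers but also the long-range tail of many distant links each of affectance $\ll 1$, and at the same time \emph{weak} (and local) enough that the indicator of every feasible set of the instance satisfies it --- including the constraints anchored at endpoints of links that do not belong to $OPT$. It is precisely the linear power assignment, through the sender-ball geometry above, that lets both hold at once; for a general non-decreasing sub-linear assignment the affectance of $w$ on $v$ is not localized at $s_w$ in this way, the density argument that certifies the relaxation breaks, and this is why Theorem~\ref{mainth3} is stated only for linear power. A secondary but essential point is that the cleanup must not preferentially discard heavy links: the survival bound $1-O(1/\kappa)$ holds for \emph{every} link uniformly, depending only on the LP constraint at that link and not on its weight, so the weighted expectation bound goes through verbatim --- which is the formal reason the LP-plus-rounding route succeeds on the weighted problem where a greedy does not.
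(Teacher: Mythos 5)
Your overall architecture --- an LP with the weighted objective $\max\sum_u w_u\delta_u$, randomized rounding with probabilities given by the fractional solution, and a Markov-inequality cleanup whose survival probability is uniform over links and hence weight-oblivious --- is exactly the paper's route, and your observations that only the objective changes from the unweighted case and that the cleanup must not discriminate against heavy links are both correct and are the reasons the theorem goes through. Your cleanup (threshold the realized incoming affectance and use monotonicity of $a_{R'}(v)\le a_R(v)$ under deletion) is a legitimate, slightly more direct alternative to the paper's route of keeping an $O(C)$-feasible set and then applying signal strengthening (Lemma \ref{lem:signal}).

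The gap is in the one place where all the content of the theorem lives: certifying that the LP is a valid relaxation. The paper's constraints are simply $\sum_{v} a_v(u)\delta_v \le C$ for \emph{every} link $u$ of the instance (LP5, Eqn.~\ref{inaffectancebound1}), and their validity on the indicator vector of $OPT$ is exactly Lemma \ref{cl4}: under linear power, a feasible set has $O(1)$ incoming affectance on \emph{any} link, whether or not that link belongs to the set. That lemma is what lets one impose the constraint at every link without knowing which links are selected, and its proof is a two-line modification of the proof of Lemma \ref{cl2} --- with $P_v=\ell_v^\alpha$ one has $c_u=c_v$ and $P_u/\ell_u^\alpha=P_v/\ell_v^\alpha$ for all pairs, so Eqn.~\ref{eq:affs1} holds without the hypothesis that $u$ is shortest. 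You correctly sense that this is ``the main obstacle'' and that linear power is what resolves it, but you replace the needed statement with an unspecified family of ``ball-and-scale'' density constraints and an appeal to a sender-ball picture; neither the constraints nor the claim that every feasible set satisfies them is actually established, and your Markov step additionally presupposes that the fractional solution satisfies a \emph{fractional affectance} bound $\sum_w a_w(v)x^*_w=O(1)$, which does not obviously follow from density constraints and would require a further argument. So the proof as written is incomplete precisely at its load-bearing step; the fix is to drop the geometric constraint system, impose the affectance constraints directly, and prove the analogue of Lemma \ref{cl4}.
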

For this problem, greedy algorithms combined with some basic observations can yield  a $O(\min\{\log \Delta, \log n\})$-approximation algorithm (we describe this algorithm in detail in Section \ref{sec:simul} when we experimentally compare it with our LP based algorithm). 

We remark that our results holds in arbitrary metric space, independent of the path loss constant $\alpha$, and faithfully treat the ambient noise term.

\subsection*{Related Work.}

The first work to study capacity of randomly deployed networks was the work of Gupta and Kumar~\cite{kumar00}. 
Rigorous worst case algorithmic analysis started with
the  work of Moscibroda and Wattenhofer \cite{MoWa06}, who
studied of the \emph{scheduling complexity} of arbitrary set
of wireless links. 
Early work on approximation algorithms
 produced approximation
factors that grew with structural properties of the network \cite{moscibroda06b,MoscibrodaOW07,chafekar07}.

The first constant factor approximation algorithm was obtained for
capacity problem for uniform power in \cite{GHWW09} (see also
\cite{HW09}) in $\mathbf{R^2}$ with $\alpha > 2$.
Fangh\"anel, Kesselheim and V\"ocking \cite{FKV09} gave an algorithm
that uses at most $O(OPT + \log^2 n)$ slots for the scheduling problem
with linear power assignment $P_v = \ell_v^\alpha$,
that holds in general distance metrics.

Recently, Kesselheim obtained 
a constant-approximation algorithm
for the capacity problem with power control for doubling metrics
and $O(\log n)$ for general metrics \cite{KesselheimSoda11}. In another work \cite{SODA11}, constant
factor approximation was achieved for all non-decreasing, sub-linear power assignments.
The greedy algorithms of \cite{GHWW09,HW09,SODA11} can be modified to handle the problems
we address here, and these algorithms essentially constitute the previous best results on these 
problems.

As far as we can ascertain, the algorithmic situation for the admission control and weighted capacity
is somewhat similar to the situation the ``basic'' capacity problem was in before the array of results
mentioned above. Thus, we have a large body of works and results in different settings motivating
the study of these questions, but no worst case algorithmic results.

The  works on the emergent field of cognitive radio are too numerous to adequately cover. We refer the reader to  \cite{Bahl:2009:WSN:1592568.1592573} for a  thorough discussion. For results on capacity of networks
in a cognitive radio context see \cite{jafarjournal,shiicccn} etc. (``capacity'' not necessarily meaning the exact same thing we do). For the stability problem in a queuing theory setting
that gives rise to the weighted capacity problem there are many works in graph based models \cite{DBLP:conf/infocom/SharmaMS06,bestInfocom08,DBLP:conf/mobihoc/LiBX09} as well as recent
ones on the SINR model \cite{lqfmobihoc}.  The weighted capacity problem was recently studied in \cite{wanwireless}, where the authors propose a version of the greedy-based $O(\log n)$ approximation. 

In terms of using a LP approach in the SINR setting, there is recent work of Hoefer \emph{et. al.} \cite{hoeferspaa}, using 
related insights in their formulation. In the context of throughput maximization, \cite{CKMPS08} employed a linear programming solution. Being based on unit disc graphs, that approach  does not lead to performance bounds we seek here.

\section{The basic capacity problem}

Let us first consider how one would attempt to write an Integer
program (and a subsequent Linear programming relaxation) for the
capacity problem. If the variable $\delta_v \in \{0, 1\}$ denotes that
link $v$ was selected in the solution, we see that for selected
links, the condition $\sum_{u \in L} a_u(v) \delta_v \leq 1$ would
have to hold. This is quite nice and linear, except for the fact that
we would have to somehow indicate that this condition need only hold
for $\{v : \delta_v = 1\}$, and that no condition need hold for
links in $\{v : \delta_v = 0\}$. There appears no way to do this
in a linear program.

For clarity, we will first present our linear program (and the whole algorithm) below, and then in proving its correctness, we will
describe how out algorithm evades the problem elucidated above.

\subsection{Algorithm}
Our algorithm has three main steps:

\begin{itemize}
\item {\bf Linear Program}
The first step of our algorithm is to solve the following linear program, with variables $\delta_u$, one corresponding to 
every link $\ell_u$.
Let $C$ be a large enough constant.
\begin{align}
\text{(LP)\quad maximize} \sum \delta_u &  \text{ subject to} \nonumber\\
\sum_{v, \ell_v \geq \ell_u} a_v(u) \delta_v & \leq C; \quad \forall u \label{lpcond1}\\
\sum_{v, \ell_v \geq \ell_u} a_u(v) \delta_v & \leq C; \quad \forall u \label{lpcond2}\\
0 \le \delta_u & \le 1; \quad \forall u  \label{lpcond3}
\end{align}

\item {\bf Rounding}
We then ``round" the fractional solution to this linear program in two steps.

Let $LP^*$ be the value of the (fractional) solution to $LP$.

First, we select a set $R = \{u: s_u = 1\}$, defined by binary variables $s_u$, which are generated independently at random such that $\Pro(s_u = 1) = \delta_u$ (and thus $\Pro(s_u = 0) = 1 - \delta_u$). 

Next, we choose a subset of $R$ named $S$ defined by $S = \{u : s'_u = 1\}$,
where $s'_u$ is a binary random variable
corresponding to this second round of selection.
The variable $s'_u$ is defined as follows:
$s'_u = 1$ iff $s_u = 1$ and the following two conditions hold:
\begin{align}
\sum_{v, \ell_v \geq \ell_u} a_v(u) s_v & \leq 3 C \label{rndcond1}\\
\sum_{v, \ell_v \geq \ell_u}a_u(v) s_v  & \leq 3 C \label{rndcond2}
\end{align}

\item {\bf Final Selection}
Finally, a feasible set is extracted from $S$ using a simple \emph{signal-strengthening} technique which we will detail later.
\end{itemize}

\subsection{Analysis}
We need the following definitions.
\begin{defn}
A link set $L$ is \emph{$\gamma$-feasible} (resp., \emph{$\gamma$-anti-feasible}), if $a_L(u) \leq \gamma$ for all $u \in L$ (resp. if $a_{u}(L) \leq \gamma$ for all $u \in L$). A link set is \emph{$\gamma$-bi-feasible} if it is both $\gamma$-feasible and $\gamma$-anti-feasible.
\end{defn}
We will simply write ``feasible'', ``anti-feasible'' and ``bi-feasible" when $\gamma = 1$.

Our first step is to show that the solution to the linear program is an approximation to the 
capacity problem, or more formally:
\begin{lemma}
Let $LP^*$ be the value of the optimal solution of $LP$. Then, $LP^* = \Omega(|OPT|)$.
\label{lpopt}
\end{lemma}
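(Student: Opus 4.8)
The plan is to produce an explicit feasible point of $LP$ with objective $\Omega(|OPT|)$. The obvious candidate is the indicator of $OPT$: set $\delta_v = 1$ for $v \in OPT$ and $\delta_v = 0$ otherwise. Then \eqref{lpcond3} holds and the objective equals $|OPT|$, so the lemma reduces to checking \eqref{lpcond1} and \eqref{lpcond2} for this $\delta$, i.e.\ that
\[
  a_{OPT^+_u}(u) \le C \qquad\text{and}\qquad a_u(OPT^+_u) \le C
\]
for every link $u \in L$, where $OPT^+_u := \{v \in OPT : \ell_v \ge \ell_u\}$. Feasibility of $OPT$ gives this for free only in the single case of the first inequality with $u \in OPT$ (then $a_{OPT^+_u}(u) \le a_{OPT}(u) \le 1$): feasibility bounds only affectance \emph{into} $OPT$, and says nothing at all about external links $u \notin OPT$.

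Hence the real content is the following claim, which I would isolate as a lemma: \emph{if $S$ is feasible and $u$ is any link, then $a_{S^+_u}(u) = O(1)$ and $a_u(S^+_u) = O(1)$ with absolute constants}, where $S^+_u := \{v \in S : \ell_v \ge \ell_u\}$. Granting this with constant $c_0$ and taking the ``large enough constant'' $C \ge c_0$, the two displayed bounds hold (apply the claim with $S = OPT$), and we conclude $LP^* \ge |OPT| = \Omega(|OPT|)$.

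To prove the claim I would first strip the power terms from affectance using the two hypotheses on $P$: for $\ell_v \ge \ell_u$, sub-linearity gives $P_v/P_u \le (\ell_v/\ell_u)^\alpha$, so $a_v(u) \le \min\{1, c_u(\ell_v/d_{vu})^\alpha\}$, and the non-decreasing property gives $P_u/P_v \le 1$, so $a_u(v) \le \min\{1, c_v(\ell_v/d_{uv})^\alpha\}$ (using the standing assumption that $c_v = O(1)$, i.e.\ link parameters bounded away from infeasibility). I would then split $S^+_u$ by length class $2^j\ell_u \le \ell_v < 2^{j+1}\ell_u$ and, within each class, into dyadic annuli $A_i$ according to $d(r_v, r_u)/\ell_u \in [2^i, 2^{i+1})$ for the first bound (by $d(s_v, s_u)$ for the second). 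Since $\ell_v = d(s_v, r_v)$, the triangle inequality lower-bounds $d_{vu}$ on an annulus, so a link of $A_i$ affects $u$ by at most $O(2^{-i\alpha})$ up to a length-class factor, and $A_i$ contributes $O(\mu_i)$ to $a_{S^+_u}(u)$, where $\mu_i := |A_i|\, 2^{-i\alpha}$ is its ``mass''.

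The crux — and the step I expect to be the main obstacle — is that $\sum_i \mu_i = O(1)$; a term-by-term estimate only yields $O(\log \Delta)$, the very loss we must avoid. Here the idea is to pick any link $w$ in the innermost non-empty annulus and show, again by the triangle inequality together with the power bounds above, that \emph{every} link of $A_j$ affects $w$ by $\Omega(2^{-j\alpha})$; summing, $1 \ge a_S(w) \ge \Omega\!\big(\sum_j \mu_j\big)$, so $\sum_j \mu_j = O(1)$. Since affectances are capped at $1$, the innermost annulus contributes at most $|A_0| = \mu_0 = O(1)$ to $a_{S^+_u}(u)$ directly, and the remaining annuli contribute $\sum_{i \ge 1} O(\mu_i)$, giving $a_{S^+_u}(u) = O(1)$; the bound $a_u(S^+_u) = O(1)$ is the mirror image with senders and receivers swapped. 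The one genuinely delicate point in writing this out is carrying the length-class index through the annulus sums — verifying that the cross terms between length classes still telescope into a convergent geometric series — but this is bookkeeping rather than a new idea.
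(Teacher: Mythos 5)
Your overall strategy --- exhibit an explicit $0$/$1$ feasible point of the LP with objective $\Omega(|OPT|)$, reducing the lemma to bounding the two directed affectance sums of an arbitrary link $u$ against the longer links of a feasible set --- is the right one, and your treatment of Cond.~\ref{lpcond1} is sound in spirit (it is the paper's Lemma~\ref{cl2}: a feasible set has $O(1)$ in-affectance onto any link shorter than all its members, though the paper proves it via a single witness link rather than your annulus/length-class decomposition). The gap is in the second half of your isolated claim, $a_u(S^+_u)=O(1)$ for a \emph{merely feasible} $S$. Your own proof sketch exposes the problem: the ``crux'' controls the total mass $\sum_j\mu_j$ by picking a witness $w$ in the innermost annulus and invoking $a_S(w)\le 1$. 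The mirror image of that step, with senders and receivers swapped, needs $a_w(S)\le 1$ for the witness --- i.e.\ \emph{anti-feasibility} of $S$ --- and feasibility of $OPT$ gives no bound whatsoever on the out-affectance $a_w(OPT)$ of an individual link $w\in OPT$. This is exactly why the paper's out-affectance bound (Lemma~\ref{cl3}) is stated for $\gamma$-anti-feasible sets only. So taking $\delta=\mathbf{1}_{OPT}$ does not let you verify Cond.~\ref{lpcond2}.

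The fix is a short preprocessing step that you are missing. Since $\sum_{v\in OPT}a_v(OPT)=a_{OPT}(OPT)=\sum_{v\in OPT}a_{OPT}(v)\le|OPT|$, at least half the links of $OPT$ satisfy $a_v(OPT)\le 2$; these form a $2$-bi-feasible subset $W$ with $|W|\ge|OPT|/2$. Setting $\delta=\mathbf{1}_{W}$ instead, Cond.~\ref{lpcond1} follows from Lemma~\ref{cl2} applied to the ($1$-feasible) set $W^+_u$, Cond.~\ref{lpcond2} follows from Lemma~\ref{cl3} applied to the ($2$-anti-feasible) set $W^+_u$, and the objective is still $\Omega(|OPT|)$. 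This is precisely the paper's argument; without the passage to a bi-feasible subset, your claim as stated is not justified by your proof, and the argument does not close.
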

\begin{proof}
To prove this, it suffices to construct a solution $\delta_u$ (for all $u$) to the $LP$ such that
$\sum_u \delta_u = \Omega(|OPT|)$, and that satisfies all the constraints in the linear program.

Since $OPT$ is feasible, there is a 2-bi-feasible subset $W$ such that $|W| > |OPT|/2$ (See \cite{icalp11} for a simple proof of this
fact).

Now construct the solution by setting
$\delta_u = 1$ if $u \in W$ and $\delta_u = 0$ otherwise. Thus $\sum_u \delta_u = |W| \geq |OPT|/2$.  Lemma \ref{lpopt}
can be proven then if we can show that Conditions  \ref{lpcond1} and \ref{lpcond2} hold for this solution, and thus form a valid
solution to $LP$. These follow directly from two Lemmas noted below (Lemmas \ref{cl2} and \ref{cl3}), by setting
$C$ to be larger than the implicit constants in those two Lemmas.
\end{proof}

The following Lemma was proven in \cite{KV10}. For completeness, we give a proof in the
appendix that holds for arbitrary ambient noise.
\begin{lemma}
Assume $L'$ is $\gamma$-feasible using a non-decreasing, sub-linear power assignment. Let $u$ be any link such that $\ell_u \leq \ell_v$ for all $v \in L'$. Then $a_{L'}(u) = O(\gamma)$.
\label{cl2}
\end{lemma}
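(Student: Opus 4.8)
The plan is to bound the affectance $a_{L'}(u) = \sum_{w \in L'} a_w(u)$ by grouping the links of $L'$ according to their distance from $u$'s receiver $r_u$ and using a packing argument. Since $u$ is the shortest link, $\ell_u \le \ell_w$ for every $w \in L'$, so the ``signal terms'' are well behaved: for each $w$, $P_w/\ell_w^\alpha \le P_u/\ell_u^\alpha$ fails (wrong direction) — rather, by sub-linearity $P_w/\ell_w^\alpha \le P_u/\ell_u^\alpha$ when $\ell_w \ge \ell_u$, which is exactly what we have. This is the crucial structural fact: the interference that a long link $w$ causes at the short link $u$ can be compared against the interference $w$ causes within the feasible set $L'$, where it is controlled.

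First I would partition $L'$ into annuli $A_i = \{ w \in L' : d_{wu} \in [2^i \ell_u, 2^{i+1}\ell_u) \}$ for $i \ge i_0$, where $i_0$ is chosen so that links closer than $\ell_u$ to $r_u$ are handled separately (there are only $O(1)$ of those, since they must be far apart from each other by $\gamma$-feasibility, and each contributes at most the trivially capped value $1$). For a link $w \in A_i$, I would show $a_w(u) = O(a_w(v_i))$ for a suitably chosen ``representative'' link $v_i \in A_i$, or more directly that $\sum_{w \in A_i} a_w(u) = O(\sum_{w \in A_i} a_w(w'))$ for any fixed $w' \in A_i$; the point is that all senders of links in $A_i$ are at roughly distance $2^i \ell_u$ from $r_u$, while within $L'$ the relevant distances $d_{w w'}$ are at most of comparable order by the triangle inequality, so $(\ell_{w'}/d_{w w'})^\alpha$ is not much smaller than $(\ell_u / d_{wu})^\alpha$ up to the $(\ell_{w'}/\ell_u) \ge 1$ factor — and here is where sub-linearity of the power assignment cancels the power ratio. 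Summing the bound $a_w(w') \le a_{L'}(w') \le \gamma$ over the annuli and using that the number of ``active'' annuli contributing nontrivially telescopes (each annulus at distance $\sim 2^i \ell_u$ can hold only $O(2^{i \cdot \text{const}})$ links of length $\ge \ell_u$ that are $\gamma$-feasible, but the per-annulus affectance is geometrically decreasing in $i$), I would get a convergent geometric series bounded by $O(\gamma)$.

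The main obstacle — and the place requiring the most care — is making the comparison $a_w(u) \lesssim a_w(w')$ uniform over all $w$ in an annulus while simultaneously keeping the geometric decay: one has to balance the triangle-inequality slack (which can make $d_{w w'}$ as small as a constant times $\ell_{w'}$, i.e. very close, inflating $a_w(w')$) against the fact that feasibility of $L'$ bounds the *total* incoming affectance at each link, so only $O(1)$ links of $L'$ can be that close to any given $w'$. The clean way is to fix, in each annulus $A_i$, the link $w'$ of minimum length and argue that every other $w \in A_i$ has $d_{w w'} = \Theta(d_{wu})$ up to the length of $w'$, so that $a_w(u)/a_w(w') = O((d_{w w'}/d_{wu})^\alpha \cdot (\ell_u/\ell_{w'})^\alpha \cdot P_{w'} \cdot P_u^{-1} \cdot \dots)$ collapses to $O(1)$ after invoking sub-linearity; then $\sum_{w \in A_i} a_w(u) = O(a_{L'}(w')) = O(\gamma)$, and a final geometric summation over $i$ (using that beyond distance $\sim 2^i \ell_u$ the cap-free affectance from a single link is $O((\ell_u/(2^i\ell_u))^\alpha) = O(2^{-i\alpha})$, so only $O(\log(1/\gamma)) = O(1)$-many annuli matter before the tail is absorbed) yields $a_{L'}(u) = O(\gamma)$. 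I would cite the quantitative packing/distance lemmas from \cite{KV10} or \cite{SODA11} for the annulus-counting step rather than reprove them.
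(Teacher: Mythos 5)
Your overall strategy --- compare the affectance that each $w\in L'$ exerts on $u$ with the affectance it exerts on some nearby link of $L'$, and use sub-linearity to cancel the power ratio --- is the right idea, but the execution via sender-distance annuli has a genuine gap. For $w,w'$ in the same annulus $A_i$ you need $d_{ww'}=d(s_w,r_{w'})=O(d_{wu})$, and you assert this "by the triangle inequality." But the triangle inequality only gives $d_{ww'}\le d(s_w,r_u)+d(r_u,s_{w'})+\ell_{w'}\le 3\cdot 2^{i+1}\ell_u+\ell_{w'}$, and $\ell_{w'}$ is unbounded above: the annuli are defined by sender position, while the lengths of links in $L'$ are only bounded \emph{below} by $\ell_u$. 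If every link in $A_i$ has length far exceeding $2^i\ell_u$, then $r_{w'}$ is far from $r_u$, $d_{ww'}\gg d_{wu}$, and $a_w(w')$ is much smaller than $a_w(u)$, so the comparison yields nothing. Your proposed fix (take $w'$ of minimum length in the annulus) does not help, since even the minimum length can be huge. A second, smaller issue: your geometric-decay claim for the tail treats the cap-free affectance of a link at distance $2^i\ell_u$ as $O(2^{-i\alpha})$, silently dropping the factor $P_w/P_u\ge 1$, which for a non-decreasing assignment can grow without bound with $\ell_w$; the decay you need must come from comparison against $a_{L'}(\cdot)\le\gamma$, not from distance alone.

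The paper's proof avoids both problems by dispensing with annuli entirely. It fixes a \emph{single} comparison link $v\in L'$, namely the one whose \emph{receiver} is closest to $r_u$, with $D=d(r_u,r_v)$, and uses the pairwise separation lemma ($d_{xy}\cdot d_{yx}\ge q^2\ell_x\ell_y$ after signal strengthening) to show that at most one link $w\in L'$ has its sender within $D/2$ of $r_u$. For every other $x$, the triangle inequality then gives $d_{xv}\le D+d_{xu}\le 3d_{xu}$ uniformly, so $a_x(u)\le 3^\alpha a_x(v)$ (using $\ell_u\le\ell_v$, monotonicity of $c$, and sub-linearity), and summing gives $a_{L'}(u)\le 1+3^\alpha a_{L'}(v)\le 1+3^\alpha\gamma$. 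The essential device you are missing is that the comparison target must be chosen by \emph{receiver} proximity to $r_u$ (so that the relevant distances $d(s_x,r_v)$ and $d(s_x,r_u)$ are comparable), together with the separation lemma that clears all but one sender out of the ball of radius $D/2$. If you want to repair your write-up, replacing the per-annulus representatives with this single globally chosen $v$ collapses your entire decomposition into the paper's two-line summation.
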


The next Lemma, something of a dual of the previous one, was proven recently in \cite{icalp11}:
\begin{lemma}
Assume $L'$ is $\gamma$-anti-feasible using a non-decreasing, sub-linear power assignment. Let $u$ be any link such that $\ell_u \leq \ell_v$ for all $v \in L'$. Then $a_u(L') = O(\gamma)$.
\label{cl3}
\end{lemma}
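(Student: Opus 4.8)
The plan is to show $a_u(L')=\sum_{v\in L'}a_u(v)=O(\gamma)$ by partitioning $L'$ into a small number of groups and, for each group, charging the affectance it receives from $u$ to the \emph{outgoing} affectance $a_w(L')$ of a single suitably chosen link $w\in L'$; since $\gamma$-anti-feasibility means $a_w(L')\le\gamma$ for every $w\in L'$, each group will contribute $O(\gamma)$. This mirrors the proof of Lemma~\ref{cl2} with the roles of senders and receivers interchanged. The only facts about $u$ that I would use are that, being a shortest link, it satisfies $P_u\le P_v$ (non-decreasing) and $P_v/\ell_v^{\alpha}\le P_u/\ell_u^{\alpha}$ (sub-linear) for every $v\in L'$; the first controls the ``clipped'' affectances and the second lets us compare $u$ with a longer link after rescaling by length.

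First I would handle the \emph{near} links $N=\{v\in L':a_u(v)\ge\tfrac12\}$. Unwinding the definition of affectance and using $P_u\le P_v$ gives $d_{uv}=d(s_u,r_v)=O(\ell_v)$ for every $v\in N$. Let $w\in N$ be a link of minimum length. By the triangle inequality $d(s_w,r_v)\le\ell_w+d(r_w,s_u)+d(s_u,r_v)=O(\ell_w)+O(\ell_v)=O(\ell_v)$, using $\ell_w\le\ell_v$, and then sub-linearity in the form $P_w/\ell_w^{\alpha}\ge P_v/\ell_v^{\alpha}$ yields $a_w(v)=\Omega(1)$ for every $v\in N$. Hence $\gamma\ge a_w(L')\ge a_w(N)=\Omega(|N|)$, so $|N|=O(\gamma)$, and therefore $a_u(N)\le|N|=O(\gamma)$ since each $a_u(v)\le1$.

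For the \emph{far} links $v\in L'\setminus N$, each term equals the unclipped value $c_v\,(P_u/P_v)(\ell_v/d_{uv})^{\alpha}<\tfrac12$. The plan here is to group the far links into dyadic length classes and, within each class, into geometric shells indexed by $d_{uv}$: inside a shell the value $a_u(v)$ is fixed up to a constant factor, and the number of links in a shell is controlled by a packing bound obtained exactly as above — a shortest link of the shell affects all the other links of that shell by a fixed amount, which $\gamma$-anti-feasibility then bounds. This makes each shell contribute $O(\gamma)$, and for a fixed length class the per-shell contributions form a convergent geometric series summing to $O(\gamma)$.

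The step I expect to be the main obstacle is summing over the length classes, of which there may be $\Theta(\log\Delta)$, without losing a $\log\Delta$ factor. The cancellation should come from the observation that far links spanning many length scales that all affect $u$ appreciably are forced to have their receivers crowded into a bounded region around $s_u$, so that the shortest such link again affects all the longer ones by a fixed amount; this is exactly where both hypotheses on $P$ are needed, the geometry interpolating between uniform power (wide ``footprints'', hence only $O(1)$ populated classes) and linear power (footprints of radius $O(\ell_u)$, arbitrarily many classes but very tightly packed). The cleanest route I see to a bound that is uniform over the whole non-decreasing sub-linear family is to avoid shell-by-shell counting altogether and instead charge each far link $v$ directly to $a_{\phi(v)}(v)$ for a suitable $\phi(v)\in L'$ — essentially the link whose receiver is closest to $s_u$ — with a separate, directly bounded, $O(\gamma)$-size exceptional set absorbing those $v$ that this witness fails to dominate; verifying the domination $a_{\phi(v)}(v)=\Omega(a_u(v))$ in all of these geometric regimes is where I expect most of the work to lie.
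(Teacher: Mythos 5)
Your treatment of the near links is salvageable, but the derivation as written does not close: from the coarsened bound $d(s_u,r_v)=O(\ell_v)$ together with sub-linearity you only get $a_w(v)\ge c_v\frac{P_w}{P_v}\cdot\Omega(1)=\Omega\bigl((\ell_w/\ell_v)^\alpha\bigr)$, which is not $\Omega(1)$ (take linear power with $\ell_w\ll\ell_v$). To rescue it you must retain the power-dependent radius $d(s_u,r_v)=O\bigl((P_u/P_v)^{1/\alpha}\ell_v\bigr)$, so that $d(s_w,r_v)^\alpha=O\bigl(\ell_w^\alpha+(P_u/P_v)\ell_v^\alpha\bigr)$ and the quotient $\frac{P_w\ell_v^\alpha/P_v}{\ell_w^\alpha+(P_u/P_v)\ell_v^\alpha}$ is $\Omega(1)$ by sub-linearity and monotonicity. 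The genuine gap, however, is the far links, which carry the entire content of the lemma: you set up a dyadic decomposition, correctly observe that it loses a $\Theta(\log\Delta)$ factor, and then defer the repair to an unverified domination $a_{\phi(v)}(v)=\Omega(a_u(v))$, conceding that establishing it is ``where most of the work lies.'' That step is not routine, and your candidate witness is the wrong one: since $a_u(v)=c_v\frac{P_u}{P_v}(\ell_v/d(s_u,r_v))^\alpha$, the witness must have its \emph{sender} near $s_u$, so that $d(s_\phi,r_v)\le d(s_\phi,s_u)+d(s_u,r_v)=O(d(s_u,r_v))$; the link whose \emph{receiver} is closest to $s_u$ is precisely the single exceptional link whose contribution is bounded separately by $1$, not the witness.

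For comparison, the paper's proof is the sender/receiver mirror of the appendix proof of Lemma~\ref{cl2} and needs neither the near/far split nor length classes. After signal strengthening, let $v$ minimize $D:=d(s_u,s_v)$ over $L'$ and let $w$ minimize $d(s_u,r_w)$; the pairwise separation bound $d_{xy}\cdot d_{yx}\ge q^2\ell_x\ell_y$ forces $d(s_u,r_x)\ge D/2$ for every $x\ne w$, whence $d(s_v,r_x)\le D+d(s_u,r_x)\le 3\,d(s_u,r_x)$ and $a_u(x)\le 3^\alpha a_v(x)$ using only $P_v\ge P_u$. Summing gives $a_u(L')\le 1+3^\alpha a_v(L')\le 1+3^\alpha\gamma$ by anti-feasibility of the single witness $v$. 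This one-witness charging is exactly the scale-uniform cancellation you were looking for; without it (or an equivalent verified argument) your sketch does not constitute a proof.
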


{\bf Remarks:} Lemmas \ref{cl2} and \ref{cl3} hold the crucial insight that allow us to circumvent the problem mentioned at the beginning
of this section. Note how these lemmas bound the affectance to and from a link $u$ without the condition that $\ell_u$ 
be a part of the feasible (or anti-feasible) set $L$. This allows us to evade the issue of having to express conditions
that only apply for links in the solution set. Instead we can write constraints (Equations
\ref{lpcond1} and \ref{lpcond2}) which apply to \emph{all} links.

The next step is to analyze the {\bf Rounding} phase. In particular, we claim that 
\begin{lemma}
$\Ex(|S|) = \Omega(LP^*) = \Omega(|OPT|)$.
\end{lemma}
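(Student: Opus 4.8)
The plan is to show that every link $u$ ends up in $S$ with probability at least a constant multiple of $\delta_u$, and then conclude by linearity of expectation: $\Ex(|S|) = \sum_u \Pro(u \in S) \ge c \sum_u \delta_u = c\, LP^*$, after which $LP^* = \Omega(|OPT|)$ is exactly Lemma~\ref{lpopt}. So the whole content is a per-link survival bound, $\Pro(u \in S) \ge \delta_u/3$.

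To get this, I would condition on the event $s_u = 1$, which occurs with probability $\delta_u$, and show that given this event both rounding constraints \ref{rndcond1} and \ref{rndcond2} hold with probability at least $2/3$; a union bound then leaves failure probability at most $1/3$, so $s'_u = 1$ conditioned on $s_u=1$ with probability at least $1/3$, giving $\Pro(s'_u = 1) = \Pro(s_u=1)\cdot\Pro(s'_u = 1 \mid s_u=1) \ge \delta_u/3$. Write $X_u = \sum_{v:\ell_v\ge\ell_u} a_v(u) s_v$ for the left side of \ref{rndcond1}. Since $a_u(u)=0$, this sum does not involve the variable $s_u$ at all, and because the $s_v$ are mutually independent, conditioning on $s_u=1$ does not change the distribution of $X_u$; hence $\Ex[X_u \mid s_u=1] = \Ex[X_u] = \sum_{v:\ell_v\ge\ell_u} a_v(u)\delta_v \le C$ by the LP constraint \ref{lpcond1}. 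Markov's inequality then yields $\Pro(X_u > 3C \mid s_u = 1) \le C/(3C) = 1/3$. The symmetric argument applied to $Y_u = \sum_{v:\ell_v\ge\ell_u} a_u(v) s_v$, using constraint \ref{lpcond2}, gives $\Pro(Y_u > 3C \mid s_u = 1) \le 1/3$, and the union bound over these two events completes the estimate.

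There is no genuine obstacle here beyond being careful with the conditioning: the point to get right is that the sums in \ref{rndcond1}--\ref{rndcond2} omit the self-term $a_u(u)=0$, so they are independent of $s_u$ and their conditional means are directly governed by the LP constraints \ref{lpcond1}--\ref{lpcond2}; the slack factor of $3$ between the LP bound $C$ and the rounding threshold $3C$ is precisely what makes the two-sided Markov argument go through. This is also the reason the LP carries both constraints \ref{lpcond1} and \ref{lpcond2} rather than just one.
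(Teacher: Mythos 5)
Your proof is correct and follows essentially the same route as the paper: linearity of expectation, the observation that the rounding conditions are independent of $s_u$ because $a_u(u)=0$, Markov's inequality against the LP constraints with the factor-$3$ slack, and a union bound giving $\Pro(s'_u=1\mid s_u=1)\ge 1/3$. The only blemish is a wording slip in your first paragraph (``hold with probability at least $2/3$ \ldots failure probability at most $1/3$ \ldots so \ldots at least $1/3$''); the correct accounting, which your detailed computation actually carries out, is that each condition fails with probability at most $1/3$, so by the union bound both hold with probability at least $1/3$.
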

\begin{proof}
Recall that $S = \{u : s_u' = 1\}$. Then
by linearity of expectation, 

\begin{eqnarray}
\Ex(|S|) &=& \Ex(\sum_u s'_u) = \sum_u \Ex(s'_u) \nonumber \\ &=& \sum_u  \Pro(s'_u = 1) = \sum_u \Pro(s'_u = 1 | s_u = 1) \Pro(s_u = 1) \nonumber \\ &=& \sum_u \Pro(s'_u = 1 | s_u = 1) \delta_u \label{eqExS1} \ .
\end{eqnarray}
where we use $\Pro(s_u = 1) = \delta_u$.

Let $\rho(u)$ denote the indicator random variable of the event that both Cond.\ 
\ref{rndcond1} and \ref{rndcond2} are fulfilled for link $u$. Then  $\Pro(s'_u = 1 | s_u = 1) =  \Pro(\rho(u) | s_u = 1)$.
The point to note here is that the events $\rho(u)$ and $s_u = 1$ are independent since
the random variable $s_u$ is not involved in the former (because $a_u(u) = 0$). 

We will prove below that $\Pro(s'_u = 1) \geq  \frac{1}{3}$ (Lemma \ref{lem:rhobound}).

Thus, $\Pro(s'_u = 1 | s_u = 1)  = \Pro(s'_u = 1) \geq  \frac{1}{3}$. Now continuing with Eqn. \ref{eqExS1},  $\Ex(|S|) \geq \frac{1}{3} \sum_{u} \delta_u\geq \frac{1}{3} LP^*$.

\end{proof}

As promised, we lower bound $\Pro(s'_u = 1)$:
\begin{lemma}
$\Pro(s'_u = 1) \geq \frac{1}{3}$
\label{lem:rhobound}
\end{lemma}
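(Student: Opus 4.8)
The plan is to bound, for a fixed link $u$, the probability that the two rounding conditions (\ref{rndcond1}) and (\ref{rndcond2}) both hold, by a single application of Markov's inequality together with a union bound. Write $X_u = \sum_{v,\, \ell_v \ge \ell_u} a_v(u)\, s_v$ and $Y_u = \sum_{v,\, \ell_v \ge \ell_u} a_u(v)\, s_v$ for the left-hand sides of (\ref{rndcond1}) and (\ref{rndcond2}), and let $\rho(u)$ be the event $\{X_u \le 3C\} \cap \{Y_u \le 3C\}$. As already noted, $a_u(u) = 0$, so neither $X_u$ nor $Y_u$ involves $s_u$; hence $\rho(u)$ is independent of $s_u$ and $\Pro(s'_u = 1 \mid s_u = 1) = \Pro(\rho(u))$, so it suffices to show $\Pro(\rho(u)) \ge \tfrac13$.

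First I would compute the relevant expectations. Since $\Ex(s_v) = \Pro(s_v = 1) = \delta_v$, linearity of expectation gives $\Ex(X_u) = \sum_{v,\, \ell_v \ge \ell_u} a_v(u)\, \delta_v$ and $\Ex(Y_u) = \sum_{v,\, \ell_v \ge \ell_u} a_u(v)\, \delta_v$. These are exactly the left-hand sides of the LP constraints (\ref{lpcond1}) and (\ref{lpcond2}) for link $u$, evaluated at the feasible LP solution $\delta$; therefore $\Ex(X_u) \le C$ and $\Ex(Y_u) \le C$.

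Next, apply Markov's inequality to the nonnegative random variables $X_u$ and $Y_u$: $\Pro(X_u > 3C) \le \Ex(X_u)/(3C) \le \tfrac13$, and likewise $\Pro(Y_u > 3C) \le \tfrac13$. A union bound then gives $\Pro(\overline{\rho(u)}) = \Pro\bigl(\{X_u > 3C\} \cup \{Y_u > 3C\}\bigr) \le \tfrac23$, hence $\Pro(\rho(u)) \ge \tfrac13$, which is what we wanted.

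There is no genuine obstacle here: the argument is essentially one line once one observes that the rounding thresholds $3C$ in (\ref{rndcond1})--(\ref{rndcond2}) were chosen to be three times the right-hand side $C$ of the LP precisely so that Markov plus a union bound leaves a constant slack. The only points needing a little care are (i) invoking the independence of $\rho(u)$ from $s_u$ (which rests on $a_u(u) = 0$) so that conditioning on $s_u = 1$ is harmless, and (ii) matching the sums restricted to $\{v : \ell_v \ge \ell_u\}$ in the rounding conditions with the identically restricted sums in the LP constraints, so that LP feasibility of $\delta$ bounds $\Ex(X_u)$ and $\Ex(Y_u)$ directly.
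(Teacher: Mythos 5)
Your proof is correct and follows essentially the same route as the paper: bound the expectations of the two sums by $C$ via LP constraints (\ref{lpcond1})--(\ref{lpcond2}), apply Markov's inequality to get failure probability at most $\tfrac13$ for each rounding condition, and finish with a union bound. Your added care in noting that the lemma really concerns $\Pro(\rho(u))$ (which, by independence from $s_u$, equals $\Pro(s'_u=1\mid s_u=1)$) is a reasonable reading of the paper's slightly loose statement.
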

\begin{proof}
By  Eqn.~\ref{lpcond1}, it holds that
$\Ex(\sum_{\ell_v \geq \ell_u} a_v(u) s_v) = \sum_{\ell_v \geq \ell_u}
a_v(u) \Ex(s_v) = \sum_{\ell_v \geq \ell_u} a_v(u) \delta_v \leq C$.
Thus by Markov inequality, the probability that Cond.\ \ref{rndcond1} 
fails is at most $1/3$.
The same applies to Cond.\ \ref{rndcond2}, using Eqn.~\ref{lpcond2}.
The Lemma then follows by the union bound. 
\end{proof}

Finally, we need to show that we can extract a large feasible subset from $S$ in the {\bf Final Selection} phase. 
The following \emph{signal-strengthening} lemma from \cite{HW09}
will be frequently useful.
\begin{lemma}{[Thm. 1 of \cite{HW09}, slightly restated]}
If $S_\gamma$ is an $\gamma$-feasible set, 
then $S_\gamma$ can partitioned in to $O(\left(\frac{\gamma}{\theta}\right)^2)$ $\theta$-feasible sets,
for any $\theta < \gamma$.
\label{lem:signal}
\end{lemma}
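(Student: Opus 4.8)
This statement is essentially Theorem~1 of \cite{HW09}; here is how I would establish it directly within the affectance formalism. The plan is a two-pass greedy coloring that exploits the asymmetry of $a_w(\cdot)$. First I would fix a total order $\preceq$ on the links of $S_\gamma$ refining the order by non-decreasing length (breaking ties arbitrarily), and for a set $T$ call $T$ \emph{$\preceq$-feasible} if $\sum_{w \in T,\, w \preceq u} a_w(u) \le \theta/2$ for every $u \in T$, and \emph{$\succeq$-feasible} if $\sum_{w \in T,\, w \succeq u} a_w(u) \le \theta/2$ for every $u \in T$. Observe that both properties are inherited by subsets (all summands are nonnegative), and that a set which is simultaneously $\preceq$-feasible and $\succeq$-feasible is $\theta$-feasible, since for $u \in T$ the classes $\{w \in T: w \preceq u\}$ and $\{w \in T: w \succ u\}$ partition $T$.

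\textbf{First pass.} I would process the links in $\preceq$-increasing order, maintaining $m_1 = \lceil 2\gamma/\theta\rceil$ color classes $S_1,\dots,S_{m_1}$, and place the current link $u$ in any class $S_i$ whose present contents satisfy $a_{S_i}(u) \le \theta/2$. Such a class always exists: at the moment $u$ is considered the classes partition exactly $\{w : w \prec u\}$, so if $u$ were blocked from every class we would get $\sum_i a_{S_i}(u) > m_1\,\theta/2 \ge \gamma \ge a_{\{w \prec u\}}(u) = \sum_i a_{S_i}(u)$, a contradiction. Since every link subsequently added to $S_i$ is $\succ u$, it never enters the sum $\sum_{w \in S_i,\, w \preceq u} a_w(u)$, so each final $S_i$ is $\preceq$-feasible.

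\textbf{Second pass.} Within each $S_i$ I would repeat the construction with the order reversed: process the links of $S_i$ in $\preceq$-decreasing order into $m_2 = \lceil 2\gamma/\theta\rceil$ classes $S_{i,1},\dots,S_{i,m_2}$, placing $u$ in any $S_{i,j}$ with $a_{S_{i,j}}(u) \le \theta/2$. The identical counting --- now over $\{w \in S_i : w \succ u\}$, whose total affectance on $u$ is at most $a_{S_\gamma}(u) \le \gamma$ --- shows a valid class always exists, and reversing the order makes each $S_{i,j}$ $\succeq$-feasible; being a subset of $S_i$ it stays $\preceq$-feasible. Hence each $S_{i,j}$ is $\theta$-feasible, and the $m_1 m_2 = \lceil 2\gamma/\theta\rceil^2 = O((\gamma/\theta)^2)$ sets $S_{i,j}$ are the required partition.

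The point that makes the two passes necessary --- and the step I would be most careful about --- is that a single greedy pass does not work: inserting $u$ into a class raises the affectance \emph{onto} every link already there, so one would additionally need to bound the affectance \emph{from} $u$, namely $a_u(S_\gamma)$, which is simply not controlled in a $\gamma$-feasible (as opposed to $\gamma$-anti-feasible) set. Reversing the order in the second pass converts the troublesome ``affectance from $u$ onto $\preceq$-larger links'' into an ``affectance onto a link'' condition, which \emph{is} bounded by $\gamma$-feasibility, at the price of squaring the number of classes.
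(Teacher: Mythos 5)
Your two-pass greedy coloring is correct: the counting argument guarantees an admissible class at each insertion, the insertion-order trick makes each first-pass class $\preceq$-feasible and each second-pass class $\succeq$-feasible, and the two halves sum to $\theta$-feasibility with $\lceil 2\gamma/\theta\rceil^2$ classes. The paper itself gives no proof (it cites Thm.~1 of \cite{HW09}), and your argument is essentially the standard decomposition proof from that source, so there is nothing to flag.
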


\begin{lemma}
There is an efficient algorithm to find a feasible set $S' \subseteq S$ such that $|S'| = \Omega(|S|)$.
\end{lemma}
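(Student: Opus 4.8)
The plan is to exploit the rounding conditions \ref{rndcond1}--\ref{rndcond2} to show that the \emph{total} affectance inside $S$ is only linear in $|S|$; a constant fraction of $S$ is then $O(1)$-feasible, and the signal-strengthening lemma (Lemma~\ref{lem:signal}) extracts from it a genuinely feasible set of the same order.

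First I would bound $a_S(S) = \sum_{u\in S} a_S(u) = \sum_{u\in S}\sum_{w\in S} a_w(u)$. For a link $x$, write $S^{\geq x} = \{w \in S : \ell_w \geq \ell_x\}$ and $S^{>x} = \{w \in S : \ell_w > \ell_x\}$. Splitting each ordered pair $(w,u)$ of links of $S$ according to whether $\ell_w \geq \ell_u$,
\[
 a_S(S) \;=\; \sum_{u\in S} a_{S^{\geq u}}(u) \;+\; \sum_{u\in S}\ \sum_{\substack{w\in S\\ \ell_w<\ell_u}} a_w(u) \;=\; \sum_{u\in S} a_{S^{\geq u}}(u) \;+\; \sum_{w\in S} a_w(S^{>w}),
\]
where the last equality just reindexes the second sum by its source link. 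Since $S\subseteq R$, Cond.~\ref{rndcond1} gives $a_{S^{\geq u}}(u)\leq 3C$ for every $u\in S$, and Cond.~\ref{rndcond2} gives $a_w(S^{>w})\leq a_w(S^{\geq w})\leq 3C$ for every $w\in S$; hence $a_S(S)\leq 6C\,|S|$. The only step that is not pure bookkeeping is this reindexing: the affectance a link receives from \emph{shorter} links, summed over all receivers of $S$, equals the affectance those links \emph{send} to longer links, which is exactly what the second rounding constraint controls --- this is how the two one-sided constraints jointly pin down the two-sided quantity $a_S(S)$.

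Given $\sum_{u\in S} a_S(u)\leq 6C|S|$, Markov's inequality shows that $S'' := \{u\in S : a_S(u)\leq 12C\}$ has $|S''|\geq |S|/2$. Since affectance only decreases under passing to a subset, $a_{S''}(u)\leq a_S(u)\leq 12C$ for all $u\in S''$, so $S''$ is $12C$-feasible. Applying Lemma~\ref{lem:signal} with $\gamma=12C$ and $\theta=1$ partitions $S''$ into $O(C^2)=O(1)$ feasible sets; letting $S'$ be the largest of these gives a feasible set with $|S'|\geq |S''|/O(1)=\Omega(|S|)$. All steps --- sorting by length, computing the affectance sums, discarding the heavy links, and the partition of Lemma~\ref{lem:signal} (which is constructive, e.g.\ by first-fit coloring) --- run in polynomial time, so this is also algorithmic. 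I do not expect a real obstacle here; the points to be careful about are the transfer from the variables $s_v$ to the set $S$ via $S\subseteq R$, and the direction of the inequality $a_w(S^{>w})\leq a_w(S^{\geq w})$. If a \emph{bi-feasible} subset were wanted in a later section, the same averaging applied symmetrically (to $\sum_{u\in S} a_u(S)$, which by the identical computation is also $O(|S|)$) plus one more filtering step would yield it.
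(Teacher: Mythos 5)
Your proposal is correct and follows essentially the same route as the paper: bound the total internal affectance of $S$ by $6C|S|$ using the two one-sided rounding conditions (the paper writes the same pair-reindexing as $\sum_{u\in S}\sum_{v\in S,\,\ell_v\geq\ell_u}(a_v(u)+a_u(v))$), keep the half of $S$ with in-affectance at most $12C$ by averaging, and finish with Lemma~\ref{lem:signal}. Your handling of ties via $S^{>w}$ versus $S^{\geq u}$ is, if anything, slightly more careful than the paper's.
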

\begin{proof}
By conditions \ref{rndcond1} and \ref{rndcond2},  the average
affectance in the selected set $S$ is
\begin{align*}
\frac{1}{|S|}\sum_{v, u \in S} a_v(u)
& = \frac{1}{|S|}\sum_{u \in S} \sum_{v \in S, \ell_v \geq \ell_u} 
    \left(a_v(u) + a_u(v)\right) \\
& \leq \frac{1}{|S|}\sum_{u \in S} (3C + 3C)  =  6 C
\end{align*}
Define $S_1 = \{u \in S: a_S(u) \leq 12\}$. From the above bound on average affectance, it is easy to see that $|S_1| \geq |S|/2$.
Finally, by the signal strengthening lemma, we can find a feasible set $S_2$ such that
$|S_2| = \Omega(|S_1|) = \Omega(|S|)$.
 \end{proof}
The first part of Thm.~\ref{mainth1} now clearly follows. The part of Thm.~\ref{mainth1} about uniform power in the variable QoS case will be handled in the next section.

The algorithms in the following two sections will follow the same tri-partite design of LP, Rounding and Final Selection. Due to space
constraints, we will mostly focus on the changes in the LP formulation, and when appropriate, the changes in the Rounding phase, without
proving everything from scratch.

\section{Cognitive radio/Admission control}

\subsection*{Variable noise and signal requirements (QoS)}

Recall that in this variation of the problem each link $v$ has a separate QoS $\beta_v$ and noise level $N_v$ and definition of affectance
changes accordingly.
If a link set is such that $P_v \leq c_1 P_u$ for all $u \ne v$ for some unspecified constant $c_1$, we call the 
power assignment \emph{nearly uniform}. The following holds.
\begin{lemma}
Assume $L'$ is anti-feasible and $u$ is some link. Assume that all
links use a nearly uniform power assignment. Then  $a_{u}(L') = O(1)$.
\label{cl2uni}
\end{lemma}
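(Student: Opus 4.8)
The plan is to bound $a_u(L')=\sum_{w\in L'}a_u(w)$ by charging it, up to a constant factor, to the \emph{single} anti-feasibility constraint $a_{w^*}(L')\le 1$ for one cleverly chosen link $w^*\in L'$. The reason this should work is that under a nearly uniform power assignment affectance is \emph{almost symmetric}: if the receiver $r_w$ is no farther from $s_{w^*}$ than it is from $s_u$ (up to a constant, measured against $\ell_w$), then $a_{w^*}(w)=\Omega(a_u(w))$, with the hidden constant depending only on $\alpha$ and the near-uniformity constant $c_1$ (using only that $c_v\ge 1$). This is exactly where the nearly uniform hypothesis is used, and it is why, unlike in Lemma~\ref{cl3}, we need no assumption relating the length of $u$ to those of $L'$.

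Concretely: assume $L'\ne\emptyset$ (else there is nothing to prove) and pick $w^*\in L'$ minimizing $\Phi(w):=\max\{\ell_w,\,d(s_u,r_w)\}$. For any $w\in L'$ the triangle inequality gives
\[
d(s_{w^*},r_w)\ \le\ d(s_{w^*},r_{w^*})+d(s_u,r_{w^*})+d(s_u,r_w)\ =\ \ell_{w^*}+d(s_u,r_{w^*})+d(s_u,r_w)\ \le\ 2\Phi(w^*)+d(s_u,r_w)\ \le\ 2\Phi(w)+d(s_u,r_w),
\]
using $\Phi(w^*)\le\Phi(w)$; distinguishing $d(s_u,r_w)\ge\ell_w$ from $d(s_u,r_w)<\ell_w$, this simplifies in both cases to $d(s_{w^*},r_w)\le 3\,\Phi(w)=3\max\{\ell_w,d(s_u,r_w)\}$.

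Finally I would turn this distance bound into an affectance bound. Since $d(s_u,r_w)=d_{uw}$ and $d(s_{w^*},r_w)=d_{w^*w}$, near-uniformity ($P_u/P_w\le c_1$ and $P_w/P_{w^*}\le c_1$) together with the displayed inequality and $c_w\ge\beta_w\ge 1$ yields a constant $\kappa=\kappa(\alpha,c_1)>0$ with $a_{w^*}(w)\ge\kappa\,a_u(w)$ for every $w\in L'$; the only care needed is the bookkeeping around the $\min\{1,\cdot\}$ truncation, which one checks separately in the two regimes above. Since $a_{w^*}(w^*)=0$ and $a_u(w^*)\le 1$, summing over $w\in L'$ and using that $L'$ is anti-feasible gives
\[
a_u(L')\ =\ a_u(w^*)+\sum_{w\in L'\setminus\{w^*\}}a_u(w)\ \le\ 1+\kappa^{-1}\sum_{w\in L'\setminus\{w^*\}}a_{w^*}(w)\ =\ 1+\kappa^{-1}a_{w^*}(L')\ \le\ 1+\kappa^{-1}\ =\ O(1).
\]
The main obstacle is choosing $w^*$ correctly: the globally shortest link of $L'$ need not have its receiver near $s_u$, and the link whose receiver is closest to $s_u$ need not be short, so neither choice alone makes the triangle inequality come out right; the combined quantity $\max\{\ell_w,d(s_u,r_w)\}$ is what does. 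The remaining subtleties — the truncation and the per-link constants $c_v$ — are routine once near-uniformity is in hand.
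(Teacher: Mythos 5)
Your proof is correct, and it takes a genuinely different route from the paper's. The paper establishes this lemma by mirroring the appendix proof of Lemma~\ref{cl2} (with senders and receivers interchanged as for Lemma~\ref{cl3}, and with near-uniformity standing in for the length hypothesis as for Lemma~\ref{cl4}): that argument first applies signal strengthening to pass to a $1/3^\alpha$-(anti-)feasible subset, isolates \emph{two} extremal links --- the one whose receiver-side point is nearest and the one whose sender-side point is nearest --- and invokes the pairwise separation bound of Lemma~\ref{lem:ind-separation} to show that at most one link can sit very close to the reference point, the truncation at $1$ absorbing that single exception. Your single extremal link $w^*$, chosen to minimize $\max\{\ell_w, d(s_u,r_w)\}$, replaces all of this machinery: you need neither signal strengthening nor Lemma~\ref{lem:ind-separation}, you charge everything against the one anti-feasibility constraint $a_{w^*}(L')\le 1$, and the truncation again absorbs only the single term $a_u(w^*)\le 1$; I checked the triangle-inequality step and the affectance transfer (including the $\min\{1,\cdot\}$ bookkeeping) in both of your regimes and they go through, yielding the clean bound $a_u(L')\le 1+3^{\alpha}c_1$. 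The one hypothesis you use that the paper never states explicitly is $c_w\ge 1$ (equivalently $\beta_w\ge 1$) in the regime $d(s_u,r_w)<\ell_w$; in the variable-QoS setting, if some $\beta_w$ were allowed to be much smaller than $1$ your constant would degrade to $O(3^\alpha c_1/\min_w\beta_w)$, so it is worth a one-line remark, but this is consistent with standard SINR assumptions and with how the paper itself treats such constants.
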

The proof is a standard modification of the same result for uniform
power with constant $\beta$ and $N$ (see, for example, Lemma 11 of
\cite{infocom11}).  Our proof of Lemma \ref{cl2} provided in the appendix gives a general idea
of this type of proof, and we mention after that proof the main changes needed to achieve Lemma \ref{cl2uni}.

The following modified LP can be used for uniform power capacity in this setting:
\begin{align}
\text{(LP2)\quad maximize} \sum \delta_u &  \text{ subject to} \nonumber\\
\sum_{v \neq u} a_u(v) \delta_v & \leq C; \quad \forall u \\
0 \leq \delta_u & \leq 1; \quad \forall u  \nonumber
\end{align}

The additional steps after solving the LP and the analysis follows the same lines as Thm.~\ref{mainth1} (which we omit due to space constraints).

\subsection*{Admission control}

Now we can focus on the admission control problem for which we will use some ideas
from the variable QoS case.

We will prove the following more general result first.
\begin{theorem}
Assume links in $\calL$ use a nearly uniform power assignment. Assume that links in $\calP$ use some
arbitrary power assignment. Then we can approximate the admission control problem up to a  factor of $O(|\calP|)$.
\label{admingen1}
\end{theorem}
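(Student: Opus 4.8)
\emph{Proof plan.} The plan is to re-run the three-phase template of Section~3 — solve a linear program, randomly round, then extract a feasible set — with the program augmented by one extra family of constraints that shields the primary links. First pre-filter $\calL$ by discarding every link $u$ with $a_\calP(u) > 1$: such a link lies in no valid $\calQ$ (it would already violate feasibility at $u$) and in particular not in $OPT$, so this step is free. The program keeps a variable $\delta_u \in [0,1]$ for each surviving $u \in \calL$, maximizes $\sum_{u \in \calL} \delta_u$, and imposes, for every $u \in \calL$, the \emph{self-protection} constraint $a_\calP(u) + \sum_{v \in \calL} a_v(u)\,\delta_v \le C$, and for every $p \in \calP$ the \emph{primary-protection} constraint $\sum_{v \in \calL} a_v(p)\,\delta_v \le C$. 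As in LP2, since $\calL$ is nearly uniform a single anti-feasibility family of $\calL$-internal terms suffices, so there is no need for the length-monotone pair of constraints from the basic LP.

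To see $LP^* = \Omega(|OPT|)$ it is enough to exhibit one good feasible point. Take $\calQ = OPT$; since $\calP \cup OPT$ is feasible, $OPT$ is itself feasible, hence, as in the proof of Lemma~\ref{lpopt}, contains a constant-anti-feasible subset $W$ with $|W| \ge |OPT|/2$. Put $\delta_u = 1$ for $u \in W$ and $\delta_u = 0$ otherwise. The self-protection constraint at $u$ then reads $a_\calP(u) + a_W(u)$: the first term is $\le 1$ after the pre-filter, the second is $O(1)$ by Lemma~\ref{cl2uni} applied (in its constant-factor version) to the anti-feasible nearly-uniform set $W$. The primary-protection constraint at $p$ reads $a_W(p) \le a_{OPT}(p) \le 1$, since $W \subseteq OPT$ and $\calP \cup OPT$ is feasible. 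Choosing $C$ above all the absolute constants involved makes $\delta$ feasible, so $LP^* \ge |W| = \Omega(|OPT|)$.

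The rounding phase is where the factor $|\calP|$ is paid. Let $\kappa$ be a large constant and form $R \subseteq \calL$ by placing each $u$ in independently with probability $\delta_u / (\kappa |\calP|)$, so $\Ex|R| = \Omega(LP^*/|\calP|) = \Omega(|OPT|/|\calP|)$. For each $u$ we have $\Ex[a_R(u)] \le C/(\kappa|\calP|) \le C$, so a conditional second-round pruning exactly as in Lemma~\ref{lem:rhobound} keeps a constant fraction of $R$ while making $a_S(u) = O(1)$ hold deterministically, where $S$ denotes the pruned set. The new ingredient is the primary links: for each $p \in \calP$, $\Ex[a_R(p)] = \frac{1}{\kappa|\calP|}\sum_v a_v(p)\,\delta_v \le C/(\kappa|\calP|)$, so Markov gives $\Pro[a_R(p) > 1/4] \le 4C/(\kappa|\calP|)$, and a union bound over the $|\calP|$ primary links yields $\Pro[\exists p : a_R(p) > 1/4] \le 4C/\kappa < 1/2$ once $\kappa$ is large enough. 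Hence with constant probability $S$ satisfies, simultaneously, $a_S(u) = O(1)$ and $a_\calP(u) \le 1$ for all $u \in S$ and $a_S(p) \le 1/4$ for all $p \in \calP$; that is, $\calP \cup S$ is $O(1)$-feasible with $\Ex|S| = \Omega(|OPT|/|\calP|)$. The final selection is then that of the basic algorithm, but must leave $\calP$ intact: from $\calP \cup S$, discard the (at most a constant fraction, by the averaging argument of the basic algorithm) links of $S$ whose incoming affectance is too large, and apply the signal-strengthening Lemma~\ref{lem:signal} to the remainder together with all of $\calP$ to obtain a genuinely feasible $\calP \cup \calQ$ with $\calQ \subseteq S$ and $|\calQ| = \Omega(|S|)$; here one works with a constant-rescaled SINR threshold (equivalently, a constant folded into the $c_v$'s) so that the budget at each $p$ and each surviving $u$ has the needed constant slack, which only moves the hidden constants. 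Composing the phases gives $|\calQ| = \Omega(|OPT|/|\calP|)$, an $O(|\calP|)$ approximation; derandomization, if wanted, is by the method of conditional expectations.

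The main obstacle — and the reason the ratio is $|\calP|$ rather than $O(1)$ — is controlling the aggregate interference the admitted links inflict on the primary set: a near-saturated primary link has essentially no spare budget, and with only Markov's inequality available per link one is forced to union-bound over all $|\calP|$ of them, which is precisely the $\Theta(|\calP|)$ deflation of the selection probabilities. The only other delicate point is the slack bookkeeping in the final phase — keeping every primary link while signal-strengthening — which is absorbed into the constant rescaling of the threshold.
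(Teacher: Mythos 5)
Your overall architecture (LP with primary-protection constraints, randomized rounding, final extraction) is in the right family, and your way of paying the $O(|\calP|)$ factor — deflating the sampling probability by $\kappa|\calP|$ and union-bounding Markov over the primaries — is a legitimate alternative to the paper's route, which instead samples at full rate, controls only the \emph{aggregate} affectance $\sum_{p\in\calP} a_S(p) = O(|\calP|)$ via a single LP constraint, and then partitions $S$ into $O(|\calP|)$ groups each of which respects every primary link's budget. But there is a genuine gap in how you protect the \emph{secondary} links from $\calP$. You keep every $u$ with $a_\calP(u)\le 1$ and defer the reconciliation of $a_\calP(u)+a_\calQ(u)\le 1$ to the final phase, claiming a ``constant-rescaled SINR threshold'' supplies the needed slack. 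It does not: a link $u$ with $a_\calP(u)=1-\epsilon$ can perfectly well belong to $OPT$ (it only needs $a_{OPT}(u)\le\epsilon$), yet after your rounding you only know $a_S(u)=O(1)$, and Lemma~\ref{lem:signal} can reduce the internal affectance on $u$ to any fixed constant $\theta$ at cost $O((\gamma/\theta)^2)$ — it cannot push it below the non-constant residual budget $1-a_\calP(u)=\epsilon$ while keeping a constant fraction of $S$. No folding of constants into the $c_v$'s fixes this, because the required slack varies per link and is unbounded below. The paper's device is exactly the missing idea: absorb $\calP$'s interference into a link-dependent noise term $N_u = N + \sum_{v\in\calP} P_v/d_{vu}^{\alpha}$ and work throughout with the resulting affectance $\hat{a}$ (this is why the QoS/variable-noise machinery is developed first). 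Then $\hat{a}_S(u)\le 1$ \emph{is} feasibility of $u$ in the presence of $S\cup\calP$, the budget is renormalized to $1$ for every surviving link, and signal strengthening applies uniformly.

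A second, more technical soft spot: in your rounding, the per-primary events are not independent of $s_u$, and conditioning on $s_u=1$ shifts $a_R(p)$ by the realized value $a_u(p)$, which can exceed your threshold $1/4$; so $\Pro[\text{good}\mid s_u=1]$ can be $0$ for links that nonetheless carry LP mass. This is patchable (links with $a_u(p)\ge 1/4$ for some $p$ number only $O(|\calP|)$ inside $OPT$, and the case $|OPT|=O(|\calP|)$ is trivial), but it needs to be said; the paper's aggregate constraint sidesteps it because the conditional shift $\sum_p \hat{a}_u(p)\le|\calP|$ is absorbed by the $5|\calP|$ threshold. With the $\hat{a}$ substitution and this conditioning repaired, your deflate-and-union-bound variant would go through.
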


\begin{proof}
Recall that the goal is to find $OPT \subseteq \calL$ of maximum size
such that $OPT \cup \calP$ is feasible. Thus in choosing $OPT$, we
have to be careful about the affectance of $\calP$ on $OPT$ and
vice-versa. Our approach is to handle the affectance from $\calP$ as
noise. In this regime, the new ``noise" present at each link is the
original noise $N$, plus the interference received from all links in
$\calP$. Specifically, for $u \in \calP \cup \calL$,
we define a variable noise level $N_{u} = N + \sum_{v \in
  \calP, u \neq v} \frac{P_v}{d_{vu}^{\alpha}}$.  Define $\hat{a}_u(v)$
to be the affectance taking this variable noise into account.

Now consider the following LP relaxation:
\begin{align}
\text{(LP2)\quad maximize} \sum \delta_u & \text{ subject to} \nonumber\\
\sum_{w \in \calP} \sum_{v \in \calL} \hat{a}_v(w) \delta_v & \leq |\calP|; \quad  \label{cog1} \\
\sum_{v \in \calL} \hat{a}_u(v) \delta_v & \leq C; \quad \forall u \in \calL \label{cog2} \\
0 \leq \delta_u & \leq 1; \quad \forall u \in \calL \nonumber
\end{align}
We show that the solution of LP2 is close to $OPT$.
\begin{lemma}
Let $LP2^*$ be the value of the solution to $LP2$. Then $LP2^* = \Omega(|OPT|)$.
\end{lemma}
\begin{proof}
Consider a $2$-anti-feasible subset of $OPT$, call this $O$.
Consider the following solution to the $LP$: set $\delta_u = 1$ if $u \in O$ and $0$ otherwise.
Cond.\ \ref{cog1} is satisfied since the incoming affectance on 
each link in $\calP$ from $OPT$ (and thus from $O$) is at most $1$. 
The case of Cond.\ \ref{cog2} follows from Lemma \ref{cl2uni}. Thus $LP2^* \geq |O| = \Omega(|OPT|)$.
\end{proof}

The next step is to round the fractional solution achieved from solving the LP. As before, we first set $s_u =1$ with independent probability $\delta_u$.
Let us define the event $\mathcal{A}$ as the condition $\sum_{w \in \calP} \sum_{v \in \calL} \hat{a}_v(w) s_v \leq 5 |\calP|$ holding. Let us define the event $\mathcal{B}_u$, for each link $u \in L$, as the condition $\sum_{v \in \calL} \hat{a}_u(v) s_v \leq 4 C$ holding.

We derive another round of selections by setting  $s_u' = 1$ iff $s_u = 1$ and both $\mathcal{B}_u$ and $\mathcal{A}$ occur. Thus,
 
\[
\Ex(s_u') = \Pro(s'_u = 1) = \Pro(s_u = 1 \land \mathcal{B}_u \land \mathcal{A}) = \Pro(\mathcal{B}_u \land \mathcal{A}|s_u = 1) \delta_u
\]

Now $\Pro(\mathcal{B}_u \land \mathcal{A}|s_u = 1) \geq 1 - \Pro(\bar{ \mathcal{B}}_u |s_u = 1) - \Pro(\bar{\mathcal{A}} |s_u = 1)$. As we have seen before, $\mathcal{B}_u$ is independent of $s_u$, thus $\Pro(\bar{\mathcal{B}}_u |s_u = 1) \leq \frac{1}{4}$ (via  Cond.\ \ref{cog2} and Markov's inequality). 

On the other hand, $\mathcal{A}$ is not independent of $s_u$. However, $\mathcal{A}$ occurring given $s_u = 1$ is the same as  $\sum_{v \in \calL, v \neq u}  \sum_{w \in \calP}\hat{a}_v(w) s_v \leq 5 |\calP| - \sum_{w \in S}\hat{a}_u(w)$ being true. But $\sum_{w \in \calP}\hat{a}_u(w) \leq |\calP|$, by the definition of affectance. Thus $\Pro(\bar{\mathcal{A}} |s_u = 1) \leq \Pro(\sum_{v \in \calL, v \neq u}  \sum_{w \in \calP}\hat{a}_v(w) s_v > 9 |\calP|) \leq \frac{1}{4}$. Thus finally, $\Pro(\mathcal{B}_u \land \mathcal{A}|s_u = 1) \geq 1 - \frac{1}{4} - \frac{1}{4} \geq 0.5$. Therefore, $\Ex(s_u') \geq 0.5 \delta_u$.

After the last round of selection, we thus get a set $R \subseteq \calL$ such that
\begin{itemize}
\item $R = \Omega(LP2^*)$, in expectation
\item $\sum_{w \in \calP} \sum_{v \in R} \hat{a}_v(w) \leq 5 |\calP|$
\item  $\sum_{w \in R} \hat{a}_v(w) \leq 4 C$ for all $v \in R$
\end{itemize}
Using averaging arguments and signal strengthening as before, we can extract $R' = \Omega(R)$ which is feasible.
To complete the solution, we need to extract a subset of $R'$ such that
\begin{equation} 
\sum_{v: s'_v = 1} \hat{a}_v(w) \leq 1 \text{ for all } w \in \calP \label{cogfeas1}
\end{equation} 

From the condition $\sum_{w \in \calP} \sum_{v \in \calL} \hat{a}_v(w) s'_v \leq 10 |\calP|$, it is not hard to see that the set of
selected links from $L$ can be partitioned into $O(|\calP|)$ sets such that Eqn. \ref{cogfeas1} holds. This
gives us the sought after $O(|\calP|)$-approximation.
\end{proof}

Thm.~\ref{admingen1} implies part a) of Thm.~\ref{mainth2} directly.
We note that this implies a $O(|\calP| \log \Delta)$-approximation 
algorithm that holds under any other
non-decreasing sublinear power assignment, by partitioning the
linkset into at most $\log \Delta$ sets of nearly-uniform power.

We prove the last part of Thm.~\ref{mainth2} below.
\begin{theorem}
Let $k = |\calP|$.
If $|OPT| \geq  \gamma_1 k \sqrt{\log k}$, for a large enough constant $\gamma_1$, then there is a constant-factor approximation algorithm for the admission control problem for uniform power.
\end{theorem}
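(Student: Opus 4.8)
The plan is to run the LP‑rounding scheme of the preceding sections on top of a cheap preprocessing step that exploits the hypothesis $|OPT|\ge\gamma_1 k\sqrt{\log k}$, $k=|\calP|$, in two ways: to throw away the links that are individually too harmful to $\calP$, and to pay for repairing the $\calP$‑constraints after rounding. Throughout, the affectance $\hat a$ is taken with the augmented noise $N_w$ that folds in the interference of $\calP\setminus\{w\}$ (as in the proof of Thm.~\ref{admingen1}), so that $\calP\cup X$ is feasible iff $X$ is internally feasible and $\hat a_X(w)\le 1$ for every $w\in\calP$.

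First I would \emph{preprocess}: put $\tau=c_2/\sqrt{\log k}$ for a suitable constant $c_2$ and delete from $\calL$ every link $v$ with $\hat a_v(w)>\tau$ for some $w\in\calP$. Since $OPT\cup\calP$ is feasible, $\hat a_{OPT}(w)\le 1$, so fewer than $1/\tau$ links of $OPT$ are heavy for a fixed $w$ and fewer than $k/\tau=(k\sqrt{\log k})/c_2$ in total; choosing $\gamma_1$ large this is at most $|OPT|/2$, so the residual optimum $OPT'$ has $|OPT'|=\Omega(|OPT|)$ and every surviving link has $\hat a_v(w)\le\tau$ for all $w\in\calP$. Then solve the LP: maximize $\sum_u\delta_u$ subject to the internal constraint $\sum_v\hat a_u(v)\delta_v\le C$ for all $u$, the constraint $\sum_v\hat a_v(w)\delta_v\le C$ for all $w\in\calP$, and $0\le\delta_u\le1$. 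Taking a $2$‑bi‑feasible subset $W$ of $OPT'$ and setting $\delta_v=1$ for $v\in W$ gives a feasible point (internal constraints by Lemma~\ref{cl2uni}; $\calP$‑constraints since $\hat a_W(w)\le\hat a_{OPT}(w)\le 1\le C$), so $LP^*=\Omega(|OPT'|)=\Omega(|OPT|)$. Round as in the preceding sections — set $s_u=1$ independently with probability $\delta_u$, apply the internal second‑round filter — to obtain $R'$ that is $O(1)$‑bi‑feasible internally with $|R'|=\Omega(LP^*)$; and condition additionally on the Markov event $\sum_{w\in\calP}\sum_{v\in R'}\hat a_v(w)\le 4Ck$, which holds with constant probability because its expectation is $\sum_w\sum_v\hat a_v(w)\delta_v\le Ck$.

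The crux is the \emph{repair} of the $\calP$‑constraints by deletion. Process the links $w\in\calP$ one at a time; while $\sum_{v\in R'}\hat a_v(w)>1$, delete a remaining link of $R'$ that affects $w$, largest $\hat a_v(w)$ first. Each deletion lowers $\sum_{v\in R'}\hat a_v(\cdot)$ on \emph{every} constraint, so the constraints are only helped; and since each deleted link contributes at most $\tau$ to any constraint, the number of deletions charged to $w$ is at most $(\sum_{v\in R'}\hat a_v(w))/\tau+1$. Summing, the total number of deleted links is at most $\tau^{-1}\sum_{w\in\calP}\sum_{v\in R'}\hat a_v(w)+k\le 4Ck/\tau+k=O(k\sqrt{\log k})$, which is at most $|R'|/2$ by the hypothesis (again for $\gamma_1$ large). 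The surviving set $\calQ$ thus has $\hat a_\calQ(w)\le 1$ for all $w\in\calP$, is still $O(1)$‑bi‑feasible internally, and $|\calQ|=\Omega(|R'|)=\Omega(|OPT|)$. Applying the signal‑strengthening lemma (Lemma~\ref{lem:signal}) yields an internally feasible $\calQ'\subseteq\calQ$ with $|\calQ'|=\Omega(|\calQ|)$; then $\calP\cup\calQ'$ is feasible and $|\calQ'|=\Omega(|OPT|)$, and boosting the constant success probability by repetition gives the claimed $O(1)$‑approximation.

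I expect the repair step to carry the real content: it is the only place that truly needs $|OPT|$ to beat $k$ by a factor $\sqrt{\log k}$, and the two uses of the hypothesis are coupled through the single scale $\tau=\Theta(1/\sqrt{\log k})$ — small enough that discarding all heavy links costs $o(|OPT|)$ links, yet large enough that fixing each of the $k$ constraints needs only $O(\sqrt{\log k})$ deletions on average, for $O(k\sqrt{\log k})$ total. The one nuisance is that deleting links to fix one $w$ changes the affectance on the other members of $\calP$; since every such change is a decrease this is harmless, but it does force the constraints to be handled sequentially rather than in parallel.
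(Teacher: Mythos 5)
Your preprocessing step and your LP relaxation are sound and match the paper's Lemma~\ref{lem:saffect}, but the repair-by-deletion step contains a genuine error: the counting inequality is backwards. You argue that because each surviving link contributes at most $\tau=\Theta(1/\sqrt{\log k})$ to the constraint of a given $w\in\calP$, the number of deletions charged to $w$ is at most $\bigl(\sum_{v\in R'}\hat a_v(w)\bigr)/\tau+1$. An upper bound of that form would require each deleted link to remove at \emph{least} $\tau$ from the constraint; knowing only that each removes at \emph{most} $\tau$ gives a \emph{lower} bound on the number of deletions needed, not an upper bound. Concretely, if $A_w=\sum_{v\in R'}\hat a_v(w)=2$ is made up of $m$ links each contributing $2/m$ with $m\gg 1/\tau$, then bringing the constraint down to $1$ forces you to delete $m/2$ links (largest-first does not help, since all contributions are equal), and $m$ can be as large as $|R'|=\Omega(|OPT|)$. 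Nothing in your construction excludes this: the LP constraint $\sum_v\hat a_v(w)\delta_v\le C$ and the Markov event $\sum_w A_w\le 4Ck$ bound the total affectance into $\calP$ but not the number of contributing links, so the repair phase can wipe out essentially the entire solution.

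The paper avoids repair altogether by making the $\calP$-constraints of the LP tight: LP4 requires $\sum_{v}\hat a_v(w)\delta_v\le 1/3$ for every $w\in\calP$, shows this is still a relaxation via a separate sparsification lemma (Lemma~\ref{lem:sparsify1}: subsampling $OPT$ at rate $1/6$ and applying Hoeffding with a union bound over $\calP$ yields an $\Omega(|OPT|)$-size subset with $\hat a(w)\le 1/3$ for all $w$ simultaneously), and then argues that after the randomized rounding the constraint value stays below $1$ for all $w\in\calP$ simultaneously with high probability --- again by concentration, and this is precisely where the per-term bound $\tau=O(1/\sqrt{\log k})$ is used: small individual contributions make the sums concentrate around their means of at most $1/3$. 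If you want to keep your architecture, you must replace the greedy repair by such a concentration argument, which forces you to tighten the right-hand side of your $\calP$-constraints from $C$ to a constant below $1$, and that in turn requires proving the sparsification lemma so that the tightened LP still has value $\Omega(|OPT|)$.
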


First we show that if $OPT$ is large, we can assume that the affectances from $OPT$ to $\calP$ are small.

\begin{lemma}
Assume $OPT \geq  \gamma_1 k \sqrt{\log k}$, for a large enough constant $\gamma_1$. 
Define $L' = \{u \in \calL: a_u(v) \leq \frac{1}{10 \sqrt{\log k}} \text { for all } v \in \calP\}$ and $OPT' = L' \cap OPT$.
Then $OPT' = \Omega(OPT)$.
\label{lem:saffect}
\end{lemma}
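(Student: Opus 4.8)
The plan is to bound the number of links in $OPT$ that are "bad" for some fixed primary link $w \in \calP$, i.e., those $u$ with $a_u(w) > \frac{1}{10\sqrt{\log k}}$, and then take a union bound over the $k$ links of $\calP$. Fix $w \in \calP$ and consider $B_w = \{u \in OPT : a_u(w) > \frac{1}{10\sqrt{\log k}}\}$. Since $OPT \cup \calP$ is feasible, in particular $a_{OPT}(w) \le a_{OPT \cup \calP}(w) \le 1$, so $\sum_{u \in B_w} a_u(w) \le 1$, which immediately gives $|B_w| < 10\sqrt{\log k}$. Hence $|\bigcup_{w \in \calP} B_w| < 10 k \sqrt{\log k}$. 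Choosing $\gamma_1$ large enough (say $\gamma_1 \ge 20$) so that $|OPT| \ge \gamma_1 k \sqrt{\log k} \ge 2 \cdot 10 k \sqrt{\log k}$, we get $|OPT'| = |OPT| - |OPT \setminus L'| \ge |OPT| - |\bigcup_w B_w| \ge |OPT|/2 = \Omega(|OPT|)$.

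First I would state the feasibility fact $a_{OPT}(w) \le 1$ for every $w \in \calP$ — this is exactly the SINR/affectance constraint (Eqn.~(2)) applied to the primary link $w$ in the feasible set $OPT \cup \calP$, and it is the only property of $OPT$ we need. Then I would do the per-link counting bound via Markov's inequality in its trivial form: if a nonnegative sum is at most $1$ and more than $t$ of its terms exceed $\tau$, then the sum exceeds $t\tau$, contradiction; here $\tau = \frac{1}{10\sqrt{\log k}}$, so $t \le 10\sqrt{\log k}$. Next the union bound over $\calP$, giving the total count $10 k\sqrt{\log k}$ of links in $OPT$ excluded from $L'$. Finally, subtract and use the hypothesis $|OPT| \ge \gamma_1 k \sqrt{\log k}$ with $\gamma_1 > 10$ (or $\ge 20$ for a clean constant $1/2$) to conclude $OPT' = \Omega(OPT)$.

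There is essentially no obstacle here: the argument is a one-line feasibility observation plus counting plus a union bound, and the $\sqrt{\log k}$ factor in the threshold is precisely tuned so that the loss, $10k\sqrt{\log k}$, is absorbed into $|OPT| \ge \gamma_1 k \sqrt{\log k}$. The only thing to be slightly careful about is that $B_w$ is defined relative to $OPT$ (not all of $\calL$), so the bound $a_{OPT}(w) \le 1$ is the right inequality to invoke — $a_{\calL}(w)$ need not be bounded. The real work of the theorem lies downstream of this lemma: once the affectances from $OPT'$ to $\calP$ are each at most $\frac{1}{10\sqrt{\log k}}$, one reruns the LP/rounding/final-selection machinery with this scaled-down constraint on Cond.~\ref{cog1}, so that after random selection and averaging the total incoming affectance on $\calP$ is $O(1)$ rather than $O(|\calP|)$, and the final partition into feasible-for-$\calP$ subsets has only $O(1)$ parts instead of $O(|\calP|)$ — but that is the content of the surrounding proof, not of this lemma.
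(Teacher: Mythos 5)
Your proof is correct and is essentially the same counting argument as the paper's: the paper bounds the aggregate affectance $a_{OPT}(\calP)\le k$ and notes each excluded link contributes at least $\frac{1}{10\sqrt{\log k}}$ to that sum, which is just your per-$w$ count plus union bound written in one step; both yield $|OPT\setminus L'|\le 10k\sqrt{\log k}$ and the same conclusion.
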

\begin{proof}
To see this, note that $a_{OPT}(\calP) \leq k$, since $\calP$ must be feasible in presence of $OPT$. Now, defining $OL = OPT \setminus OPT'$, $a_{OL}(\calP) \geq |OL| \frac{1}{10 \sqrt{\log k}}$.
Thus $|OL| \frac{1}{10 \sqrt{\log k}} \leq k$, or $|OL| \leq 10 \sqrt{\log k} \cdot k$ and finally $|OPT'| \geq |OPT| - |OL| \geq (\gamma_1 - 10) \sqrt{\log k} \cdot k = \Omega(|OPT|)$ if $\gamma_1$ is large enough.
\end{proof}

We can also claim a strengthening property.
\begin{lemma}
Assume $R$ is a set such that for all $u \in R$, $a_u(v) \leq
\frac{1}{10 \sqrt{\log k}} \text { for all } v \in \calP$, and $a_R(v)
\leq 1$ for all $v \in \calP$. Then there is a subset $\hat R$
with $|\hat{R}| = \Omega(|R|)$,
such that $a_{\hat R}(\ell) \leq 1/3$ for all $\ell \in \calP$ and such a subset can be found in polynomial time with high probability.
\label{lem:sparsify1}
\end{lemma}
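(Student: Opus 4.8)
The plan is to build $\hat R$ by \emph{random subsampling followed by alteration}, and I will assume throughout that $|R| \ge c_0 k\sqrt{\log k}$ for a suitable constant $c_0$ (which holds in our application, where $R$ comes from rounding an LP of value $\Omega(|OPT|)$; the case of small $|R|$, and of $k$ below an absolute constant, can be handled directly by subsampling at a smaller constant rate). First I would form $\hat R_0$ by including each $u \in R$ independently with a small constant probability $p$ (say $p = 1/12$). Then $\Ex[|\hat R_0|] = p|R|$, so a Chernoff bound gives $|\hat R_0| \ge p|R|/2$ except with probability $e^{-\Omega(|R|)}$, and for every $\ell \in \calP$ we have $\Ex[a_{\hat R_0}(\ell)] = p \cdot a_R(\ell) \le p < 1/3$, so every constraint holds in expectation with slack.

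The tempting next move --- arguing that $a_{\hat R_0}(\ell) \le 1/3$ for \emph{all} $\ell \in \calP$ with high probability by a Chernoff bound plus a union bound over the $|\calP| = k$ links --- is exactly the crux, and is where the hypothesis $a_u(v) \le \frac{1}{10\sqrt{\log k}}$ is needed and is \textbf{not} strong enough by itself. Since $a_{\hat R_0}(\ell)$ is a sum of independent terms each bounded by $M_\ell := \max_{u \in R} a_u(\ell) \le \frac{1}{10\sqrt{\log k}}$ with mean $\le p$, a Chernoff/Bernstein estimate yields only $\Pro[a_{\hat R_0}(\ell) > 1/3] \le \exp(-\Omega(1/M_\ell)) \le \exp(-\Omega(\sqrt{\log k}))$, and multiplying this by $k$ in a union bound fails for large $k$ because $\exp(-\Omega(\sqrt{\log k}))$ is only subpolynomially small. (This is the same $\sqrt{\log k}$ ``sweet spot'' that already appeared in Lemma~\ref{lem:saffect}.)

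To get around this I would not insist that the first random set already works, but instead \emph{repair} the overloaded links. Let $B = \{\ell \in \calP : a_{\hat R_0}(\ell) > 1/3\}$; by the per-link tail bound, $\Ex[|B|] \le k \exp(-\Omega(\sqrt{\log k})) = k / 2^{\Omega(\sqrt{\log k})} = o(k)$, so $|B| = o(k)$ with probability at least $2/3$ by Markov. For each $\ell \in B$, delete from $\hat R_0$ --- greedily, in decreasing order of $a_u(\ell)$ --- a minimal set $D_\ell$ of contributors of $\ell$ until $a_{\hat R_0 \setminus D_\ell}(\ell) \le 1/3$ (so $\sum_{u \in D_\ell} a_u(\ell) \le a_{\hat R_0}(\ell) - 1/3 + M_\ell \le 1$), and set $\hat R = \hat R_0 \setminus \bigcup_{\ell \in B} D_\ell$. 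Since deletions only decrease affectances, $a_{\hat R}(\ell) \le 1/3$ for every $\ell \in \calP$ (the overloaded ones by construction, the rest were already fine). Everything here is polynomial time; a single round succeeds with constant probability, so $O(\log n)$ independent repetitions (verifying the outcome each time) give the ``with high probability'' conclusion.

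It remains --- and this, together with the weak concentration above, is the real obstacle --- to show $|\hat R| = \Omega(|R|)$, i.e.\ that $\sum_{\ell \in B} |D_\ell| = o(|R|)$; I would bound its expectation by linearity. The key refinement of the tail estimate is that a link $\ell$ with \emph{small} $M_\ell$ (which may demand many deletions) is \emph{correspondingly much less likely} to be overloaded: $\Pro[\ell \in B] \le \exp(-\Omega(1/M_\ell))$, while on that event $|D_\ell| = O(1/M_\ell)$ if the contributions of $\ell$ are ``chunky'' and, in general, is at most the number of $\ell$'s contributors landing in $\hat R_0$ (a $\mathrm{Bin}(\cdot, p)$ variable concentrated around $p$ times the total). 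Splitting on whether $M_\ell \le \frac{1}{C\log k}$ (so the tail beats any fixed power of $k$) or not, and using a Cauchy--Schwarz/Chernoff bound in each regime together with $|R| = \omega(k)$ and the boundedness of $t \mapsto t\,e^{-\Omega(t)}$, one gets $\sum_\ell \Ex[\mathbf{1}[\ell \in B]\,|D_\ell|] = o(|R|)$. A final Markov step gives $\sum_{\ell \in B} |D_\ell| = o(|R|)$ with constant probability, whence $|\hat R| \ge p|R|/2 - o(|R|) = \Omega(|R|)$. A deterministic alternative via iterated LP rounding (dropping a packing constraint whenever it retains few fractional variables) is also conceivable, but the alteration argument seems cleaner.
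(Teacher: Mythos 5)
Your proposal takes a genuinely different route from the paper, and your diagnosis of why the ``tempting next move'' fails is in fact a critique of the paper's own proof: the paper \emph{does} simply subsample at rate $1/6$ and apply Hoeffding plus a union bound over the $k$ links of $\calP$. As you observe, with each term bounded only by $M=\frac{1}{10\sqrt{\log k}}$ and total mean at most $1/6$, the best per-link tail one can get is $\exp(-\Theta(1/M))=\exp(-\Theta(\sqrt{\log k}))$, and this is tight (take a primary link with exactly $10\sqrt{\log k}$ contributors, each of affectance exactly $\frac{1}{10\sqrt{\log k}}$: the overload event is $\mathrm{Bin}(10\sqrt{\log k},1/6)\ge \frac{10}{3}\sqrt{\log k}$, which has probability $\exp(-\Theta(\sqrt{\log k}))\gg \frac{1}{10k}$ for large $k$). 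The paper reaches the claimed bound $\frac{1}{10k}$ only because its statement of Hoeffding carries the $n^2$ that belongs to the sample-mean form while being applied to a sum, and because $\sum_u(d_u-b_u)^2$ is written as $n^2(\frac{1}{10\sqrt{\log k}})^2$ rather than $n(\frac{1}{10\sqrt{\log k}})^2$; with the correct inequality the exponent is $-\Theta(\sqrt{\log k})$ and the union bound over $k$ constraints does not close. So you have identified a real gap in the published argument, not missed a trick.

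That said, your repair is not yet a proof. Two issues. First, you prove a weaker statement: you assume $|R|=\Omega(k\sqrt{\log k})$, which is not a hypothesis of the lemma (it is available in the application, where the lemma is applied to a subset of $OPT$ with $|OPT|\ge\gamma_1 k\sqrt{\log k}$, but your dismissal of the small-$|R|$ case by ``subsampling at a smaller constant rate'' is not justified and the unrestricted statement is far from obvious). Second, and more seriously, the step $\sum_{\ell\in B}|D_\ell|=o(|R|)$ is only gestured at, and the tools you name do not close it. Cauchy--Schwarz against the tail $\Pro[\ell\in B]\le\exp(-\Omega(1/M_\ell))$ fails on, e.g., a primary link $\ell$ with one contributor of affectance $\frac{1}{10\sqrt{\log k}}$ plus $\Theta(|R|)$ contributors of affectance $\Theta(1/|R|)$: there $M_\ell$ is at its maximum, so the only tail bound you have is $\exp(-\Omega(\sqrt{\log k}))$, while the only a priori bound on $|D_\ell|$ is the number of selected contributors, which is $\Theta(|R|)$; the product, summed over $k$ such links, is $k|R|\exp(-\Omega(\sqrt{\log k}))$, which is not $o(|R|)$. (The \emph{true} expectation is fine in this example -- the many tiny contributors concentrate and $\ell$ is almost never overloaded -- but seeing that requires splitting each link's contributors by scale and bounding, per scale, the probability that \emph{that scale} is responsible for the overload jointly with the number of deletions it forces, e.g.\ via $\Ex[Z e^{-cZ}]=O(1)$ for the binomial count $Z$ at each scale. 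Your single split at $M_\ell\le\frac{1}{C\log k}$ does not do this.) So: correct diagnosis, plausible alteration strategy, but the key quantitative step remains open as written.
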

\begin{proof}
Simply select each link in $R$ with probability $\frac{1}{6}$. Let the set of selected links be $Q$. Then
$\Ex(a_Q(v)) \leq \frac{1}{6}$ for all $v \in \calP$. Consider a fixed $v \in \calP$. Now
$a_Q(v) = \sum_{u \in R} a_u(v) s_u$ where $s_u$ is the iid random variable indicating selection in to $Q$.
We can use Hoeffding's inequality to get a large deviation bound.

\begin{theorem}{[Hoeffding, \cite{hoeffding1963}]}
Let the independent random variables $X_1, \ldots X_n$ be bounded, i.e., $X_u \in [b_u, d_u]$, and let $X = \sum_u X_u$. Then,
$$\Pro(X - \Ex(X) \geq t) \leq \exp\left(-\frac{2 t^2 n^2}{ \sum_u (d_u - b_u)^2}\right)\ .$$
\end{theorem}

Set $X_u$ to be $a_u(v) s_u$. We can verify that given our assumptions, setting $b_u = 0$ and $d_u = \frac{1}{10 \sqrt{\log k}}$ suffices. Setting $t = \frac{1}{6}$,
\begin{align*}
 \Pro\left(a_Q(v) \geq \frac{1}{3}\right) & \leq \Pro\left(a_Q(v) - \Ex(a_Q(v)) \geq \frac{1}{6}\right) \\ 
&\leq  \exp\left(-\frac{2  n^2}{36 n^2 (\frac{1}{10 \sqrt{\log k}})^2}\right) \leq \frac{1}{10 k}
\end{align*}
This implies, by the union bound, that with probability at least $\frac{9}{10}$, $a_Q(v) \leq \frac{1}{3}$ for all $v \in \calP$ simultaneously. We now have proof of not only the existential statement, but the algorithmic one, since we can repeat the random experiments multiple times to get the high-probability result. 
\end{proof}
Note that the above holds equally for any affectance function (specifically, the case $\hat a$).

Now we can describe the linear programming relaxation. First note that by virtue of Lemma \ref{lem:saffect} it suffices to  assume the input instance is $L'$ and the optimum is $OPT'$ (links not in $L'$ can be thrown out by simple pre-processing). 
Let us reuse notation $\calL$ and $OPT$ to refer to this new instance after pre-processing.

Consider the following linear program ($LP4$)
\begin{align}
\text{(LP4)\quad maximize} \sum \delta_u & \text{ subject to} \nonumber\\
\sum_{v \in \calL} \hat{a}_v(w) \delta_v & \leq 1/3; \quad \forall w \in \calP \label{cog1l} \\
\sum_{v \in \calL} \hat{a}_u(v) \delta_v & \leq C; \quad \forall u \in \calL \label{cog2l} \\
0 \leq \delta_u & \leq 1; \quad \forall u \in \calL \nonumber
\end{align}
We claim that this is a relaxation up to constant factors.
\begin{lemma}
Let $LP^*$ be the optimal value of $LP4$. The $LP^* = \Omega(|OPT|)$.
\end{lemma}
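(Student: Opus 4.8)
The plan is to exhibit an explicit \emph{integral} assignment $\delta$ that is feasible for $LP4$ and has value $\Omega(|OPT|)$; since $LP^*$ is at least the value of any feasible point, this suffices. The candidate $\delta$ will be the indicator vector of a suitably thinned subset of $OPT$.

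First I would re-express the feasibility of $\calP \cup OPT$ in terms of the noise-absorbed affectance $\hat a$. For $w \in \calP$, splitting the interference received at $w$ into the part originating in $\calP\setminus\{w\}$ (which is exactly what is folded into $N_w$) and the part originating in $OPT$ shows that ``$\calP\cup OPT$ feasible'' implies $\hat a_{OPT}(w) \le 1$ for every $w \in \calP$, and likewise $\hat a_{OPT\setminus\{v\}}(v)\le 1$ for every $v\in OPT$; that is, $OPT$ is $\hat a$-feasible and its total affectance on each primary link is at most $1$. Summing the first bound over $w\in\calP$ gives $\hat a_{OPT}(\calP)\le k$.

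Next I would apply the two thinning steps. After the pre-processing of Lemma~\ref{lem:saffect}, every surviving link $u\in\calL$, in particular every $u\in OPT$, has small affectance on each primary link, so the hypotheses of Lemma~\ref{lem:sparsify1} (in its $\hat a$ form, noted to hold for any affectance function) are met with $R = OPT$: it yields $\widehat{OPT}\subseteq OPT$ with $|\widehat{OPT}| = \Omega(|OPT|)$ and $\hat a_{\widehat{OPT}}(\ell)\le 1/3$ for all $\ell\in\calP$, which already settles constraint~\eqref{cog1l} for any $\delta$ supported on $\widehat{OPT}$. Since $\widehat{OPT}\subseteq OPT$ is $\hat a$-feasible, the anti-feasible-subset fact from \cite{icalp11} (already used in the proof of Lemma~\ref{lpopt}), applied with the $\hat a$ affectance, gives $O\subseteq\widehat{OPT}$ that is $2$-anti-feasible with $|O|\ge|\widehat{OPT}|/2 = \Omega(|OPT|)$. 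Now set $\delta_v=1$ for $v\in O$ and $\delta_v=0$ otherwise: constraint~\eqref{cog1l} holds since $\hat a_O(\ell)\le\hat a_{\widehat{OPT}}(\ell)\le 1/3$; constraint~\eqref{cog2l} holds because, $O$ being $\hat a$-anti-feasible under a nearly uniform power assignment, Lemma~\ref{cl2uni} (which applies verbatim to $\hat a$) gives $\hat a_u(O)=O(1)\le C$ for every $u\in\calL$ once $C$ exceeds the implied constant; and the box constraints $0\le\delta_u\le1$ are trivial. Hence $\delta$ is feasible for $LP4$ with $\sum_u\delta_u = |O| = \Omega(|OPT|)$, so $LP^* = \Omega(|OPT|)$.

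The main point requiring care is consistency of the affectance function across the two thinning steps: since $\hat a\ge a$ and the gap can be more than a constant when a primary link is close to its SINR threshold, one must state the pre-processing threshold of Lemma~\ref{lem:saffect} and the sparsification of Lemma~\ref{lem:sparsify1} using $\hat a$ throughout. This is legitimate, because the only global fact about $OPT$'s interaction with $\calP$ that the argument ever invokes is $\hat a_{OPT}(\calP)\le k$, which is itself a direct consequence of the feasibility of $\calP\cup OPT$ as reorganized in the second paragraph; everything else is a routine combination of Lemmas~\ref{lem:sparsify1}, \ref{cl2uni}, and the anti-feasible-subset fact.
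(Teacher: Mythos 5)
Your proposal is correct and follows essentially the same route as the paper's (sketched) proof: apply Lemma~\ref{lem:sparsify1} to $OPT$ to get a subset with affectance at most $1/3$ on every primary link, pass to a $2$-anti-feasible subset, and use its indicator vector, verifying constraint~\eqref{cog1l} by monotonicity and constraint~\eqref{cog2l} via Lemma~\ref{cl2uni}. The only difference is that you spell out the details the paper leaves implicit (that feasibility of $\calP\cup OPT$ yields $\hat a_{OPT}(w)\le 1$, and that all thinning steps must be stated consistently in terms of $\hat a$), which is a worthwhile clarification rather than a new idea.
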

\begin{proof}{(Sketch)}
By Lemma \ref{lem:sparsify1}, there exists  $O \subseteq OPT$ such that $a_O(w) \leq \frac{1}{3}$ for all $w \in \calP$. Now consider
a 2-anti-feasible subset $O'$ of $O$. Consider as the solution of the
$LP$, $\delta_u = 1$ if $u \in O'$ and $\delta_u = 0$
otherwise. That Cond.\ \ref{cog1l} is satisfied follows from the claim
that $a_O(w) \leq 
\frac{1}{3}$ for all $w \in \calP$. Cond.\ \ref{cog2l} follows from anti-feasibility arguments along the lines made before.
\end{proof}

We can round this solution in the same way as before, with a two stage selection process. The proof varies only in that we need to claim that after the second
selection (characterized by Bernoulli variable $s_u'$), $\sum_{v \in \calL} \hat{a}_v(w) s'_v \leq 1$ with high probability, simultaneously for all $w \in \calP$.
This  follows from an argument similar to Lemma \ref{lem:sparsify1} using the fact that affectances are bounded by $\frac{1}{10 \sqrt{\log k}}$ and using the Hoeffding's inequality.

\section{Weighted capacity}
\label{sec:weighted}
Recall that the result for weighted capacity applies only to linear power. For linear power the following stronger version of Lemma \ref{cl2} holds.

\begin{lemma}
Assume $L'$ is feasible using linear power and $\ell$ is any link (also using linear power). Then, $a_{L'}(\ell) = O(1)$.
\label{cl4}
\end{lemma}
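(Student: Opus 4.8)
The plan is to prove a statement that, for linear power, holds \emph{without} the length restriction present in Lemma~\ref{cl2}: the affectance on an arbitrary link $\ell$ from a feasible set $L'$ is $O(1)$, regardless of whether $\ell$ is shorter than the links in $L'$. The key structural fact about linear power is that $P_v/\ell_v^\alpha$ is the same for all links, so the affectance simplifies to $a_w(v) = \min\{1, c_v (\ell_w/d_{wv})^\alpha\}$, and in particular is governed purely by the geometry of $w$'s sender relative to $v$'s receiver. The natural approach is the standard ``ring/annulus'' packing argument: partition the senders of $L'$ into groups by their distance from $r_\ell$, bound the number of links in each group using feasibility of $L'$, and sum a geometric series.

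First I would fix the target link $\ell$ and, for each integer $k\ge 0$, let $L_k' = \{w \in L' : 2^k \ell \le d_{w\ell} < 2^{k+1}\ell\}$ (with a small correction term for the innermost ring where $d_{w\ell}$ may be comparable to $\ell_w$). For $w \in L_k'$ we have $a_w(\ell) = O((\ell_w/d_{w\ell})^\alpha)$, and since linear power is in particular non-decreasing and sub-linear, one can also relate $\ell_w$ to $d_{w\ell}$: if $\ell_w$ were much larger than $d_{w\ell}$ the link $w$ itself could not be feasible in $L'$ together with... — more carefully, feasibility of $L'$ forces the senders in $L_k'$ to be ``spread out'' at scale roughly $2^k\ell$. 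The second step is to quantify this: because $L'$ is feasible, for any two links $w, w' \in L'$ the mutual affectance is at most $1$, which (using the linear-power form of affectance and the triangle inequality, as in the proofs of the analogous uniform-power lemmas and Lemma~\ref{cl2} in the appendix) implies a packing bound — the number of links in $L_k'$ whose receivers lie in a ball of radius $O(2^k\ell)$ around $r_\ell$ is $O(1)$ per unit ``cell'', giving $|L_k'| = O(2^{k d})$ in a doubling-like fashion, or more precisely the total affectance contributed by ring $k$ telescopes.

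The cleanest route is probably to avoid an explicit dimension parameter and instead argue directly on affectance sums: show $\sum_{w \in L_k'} a_w(\ell) = O(1)$ for each $k$ by charging it against $\sum_{w \in L_k'} a_\ell(w)$ or against the pairwise feasibility of $L'$ restricted to $L_k'$, and then observe that the per-ring bound actually decays geometrically in $k$ once $d_{w\ell} \gg \ell$, because $a_w(\ell) = O((\ell_w/d_{w\ell})^\alpha) = O((\ell/d_{w\ell})^\alpha \cdot (\ell_w/\ell)^\alpha)$ and feasibility caps how many long links can be packed near $r_\ell$. Summing $\sum_k O(2^{-k\epsilon})$ for some $\epsilon>0$ then yields $a_{L'}(\ell) = O(1)$.

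The main obstacle will be handling \emph{long} links in $L'$ — those with $\ell_w \gg \ell$ — which are exactly the case excluded by the hypothesis of Lemma~\ref{cl2}. For such links the naive bound $a_w(\ell) \le 1$ is useless if many of them cluster near $r_\ell$, so the crux is to exploit the \emph{linear} power assignment specifically: the received power of $w$ at its own receiver is $P_w/\ell_w^\alpha$, a universal constant, so a long link $w$ that places its sender within distance $d_{w\ell}$ of $r_\ell$ also places it within a controlled distance of $r_w$, and feasibility of $L'$ then bounds how many such long links can coexist. I expect this to be the one place where the proof genuinely uses $P_v = \ell_v^\alpha$ rather than just monotonicity and sublinearity, and it is where I would spend the most care; everything else is the routine geometric-series bookkeeping that appears in the appendix proof of Lemma~\ref{cl2}.
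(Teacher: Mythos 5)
Your approach is genuinely different from the paper's, and it has two real problems. First, the crux you correctly identify --- many long links of $L'$ whose senders cluster near $r_\ell$, each potentially contributing affectance close to $1$ --- is exactly the step you leave unresolved. You assert that ``feasibility of $L'$ then bounds how many such long links can coexist'' but give no argument; this is the entire content of the lemma in the regime excluded by the hypothesis of Lemma~\ref{cl2}, so a proposal that defers it has not proved the statement. Second, the annulus decomposition with a packing bound of the form $|L_k'| = O(2^{kd})$ and a convergent series $\sum_k O(2^{-k\epsilon})$ intrinsically requires the path-loss exponent $\alpha$ to exceed the (doubling) dimension of the underlying space. The paper explicitly claims its results in arbitrary metric spaces for every $\alpha>0$, and Lemma~\ref{cl4} is used in that generality; your route, even if completed, would prove a strictly weaker statement.

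The paper's actual proof is a one-line reduction to the appendix proof of Lemma~\ref{cl2}. Writing $u$ for the target link, that proof picks the link $v\in L'$ whose receiver is nearest to $r_u$ and the link $w\in L'$ whose sender is nearest to $r_u$; the pairwise separation forced by feasibility (Lemma~\ref{lem:ind-separation}) shows every sender other than $s_w$ lies at distance at least $D/2$ from $r_u$, where $D=d(r_u,r_v)$, whence $d_{xv}\le 3\,d_{xu}$ and each $a_x(u)$ is at most $3^\alpha a_x(v)$ times the ratio $\bigl(c_u \ell_u^\alpha/P_u\bigr)\big/\bigl(c_v \ell_v^\alpha/P_v\bigr)$; summing against $a_{L'}(v)\le \gamma$ and adding $1$ for the single link $w$ finishes the argument. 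The length hypothesis of Lemma~\ref{cl2} is used only to bound that ratio by $1$; under linear power $P_u=\ell_u^\alpha$ the ratio equals $1$ for every pair of links, so the hypothesis can simply be dropped. Note that this charging-to-the-nearest-receiver step is precisely what disposes of your problematic long links --- at most one link of $L'$ can have its sender inside the ball of radius $D/2$ around $r_u$, and it is charged the trivial bound $a_w(u)\le 1$ --- and it works in any metric for any $\alpha>0$, with no ring decomposition needed.
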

The proof is nearly identical to that of Lemma \ref{cl2}, as elaborated
in the appendix.

We can now write the following LP relaxation for the weighted capacity problem.
\begin{align}
 \text{(LP5) maximize} \sum w_u \delta_u &  \text{ subject to} \nonumber\\
  \sum_{v} a_v(u) \delta_v & \leq C; \quad \forall u \label{inaffectancebound1}\\
  0 \leq \delta_u & \leq 1; \quad \forall u \in L \nonumber
\end{align}

\begin{proof}{of Thm.~\ref{mainth3} (sketch)}
The proof is rather like that of Thm.~\ref{mainth1}. Once again, we select each link into a set $R$ with probability $\delta_u$ (characterized by Bernoulli variable $s_u$ for each link $u$) and then do a further selection by setting $S = \{u : s'_u = 1\}$ where $s'_u = 1$ iff $s_u = 1$ and $\sum_{v \in R} a_v(u) \leq 4 C$. As in the proof of Thm.~\ref{mainth1}, one can show that $\Ex(s'_u) = \Omega(\delta_u)$ and thus the expected weighted output is $\Omega(\sum_{u} w_u \delta_u)$ which is within a constant factor of the optimum. Finally, the set $S$ can be partitioned into a constant number of
feasible subsets using signal strengthening (Lemma \ref{lem:signal}), completing the proof.
\end{proof}

For other power assignments, such as uniform power, we seem to be within striking distance of a $O(1)$-approximation.
This is unfortunately not the case, we can only claim a poly-logarithmic approximation, worse than the greedy case. However, as
we show in Section \ref{sec:simul}, in practice the LP approach might be applicable to these other power assignments as well.

%
%
%
  
\section{Simulations}
\label{sec:simul}
In this section, we present results from simulation experiments. We focus on the weighted capacity problem for our experiments. It is difficult to conduct a comparative experiment for the admission control problem, there being no obvious previous algorithm to compare it with.

In contrast, the weighted capacity problem admits straightforward modifications of to the greedy algorithm, and thus a better comparative benchmark for our algorithm. Two natural greedy algorithms can be proposed:

\begin{itemize}
\item {\bf Using weight classes}: Let $\max_u w_u = n$ (by scaling). Now we can assume that all $w_u \in [1, n]$. This is because links with smaller weights can be discarded without losing more than a factor of $2$ in the approximation quality. Now divide the links into
$\log n$ weight classes, the weight class $W_t$ is defined by $W_t = \{u : w_u \in [2^t, 2^{t+1}]\}$ for $t = 0$ to $\log n -1$. Now if we consider links belonging to a single $W_t$, the weights do not matter (up to a factor of 2). We simply run the greedy algorithm of \cite{SODA11} for each $W_t$, and output the solution for the best weight class. This gives a straightforward $O(\log n)$ approximation factor.
\item {\bf Using length classes}: Let $\min_u \ell_u = 1$ by scaling and let $\max_u \ell_u = \Delta$. Divide the links
into length classes $L_t = \{u : \ell_u \in [2^t, 2^{t+1}]\}$ for $t = 0$ to $\log \Delta$. Within $L_t$ we can choose to run the greedy algorithm on the links in any order since the lengths are essentially the same, thus we go through links according to descending order of weights, achieving a constant factor approximation on $L_t$. We choose the solution for the best $L_t$, thus getting a $O(\log \Delta)$ approximation.
\end{itemize}
Thus, comparing the two greedy algorithms, we achieve a $O(\min(\log
n, \log \Delta))$ approximation. In what follows, we shall refer to
this joint algorithm as 
``greedy algorithm''.

\subsection*{Experimental setup}
We randomly generated the instances. Some important parameters of the experiments are as follows:
\begin{enumerate}
\item $\Delta$: The maximum length of a link (the implicit minimum being $1$)
\item $R$: A number  indicating that the sender of a link is chosen
  from a $R \times R$ square
\item $n$: number of links
\end{enumerate}
We also use $N=0$, $\beta=1$, and $\alpha = 2.5$.
The instances were generated as follows. For each link, the sender was chosen randomly from a $R \times R$ square. The length of the link was chosen randomly from $[1, \Delta]$. The receiver was thus placed at this distance from the sender and at a random direction. The weight was chosen independently from $[1, n]$. We will mention different weight distributions later, and mention this initial choice of weight distribution as the \emph{ordinary} distribution.

One crucial aspect of both greedy algorithms as well as the LP algorithm is the constants used. For the LP algorithm, this is the constant $C$ in Eqn. \ref{inaffectancebound1}. The greedy algorithm of \cite{SODA11} also depends on a constant.
Though theoretical bounds for these constants are available, it has been observed before that these theoretical
bounds do not perform the best in practice \cite{infocom11}. We run all algorithms with different values of the constant in question, running over reasonable values in small increments, and choosing the best solution for each algorithm separately.
We ran our experiments in MATLAB, and used the convex optimization package  \texttt{CVX} \cite{cvx} to solve the LP.

The overall message from the experiments is that using the linear
programming formulation gives a substantial improvement in the
solution quality in many cases. 
On the other hand, the greedy algorithm is also not without merit, and can outperform the LP in certain other situations.
It appears that the smaller the maximum feasible set is, the better
greedy does, while as the solution size/quality improves, LP
outperforms greedy. This is not surprising. When the set is really
dense and the link lengths are large, the quality of the solution is
bad and the cost incurred by greedy due to length-class or
weight-class partitions is minimal. 

 \begin{figure}
\begin{center}
\includegraphics[height=2in]{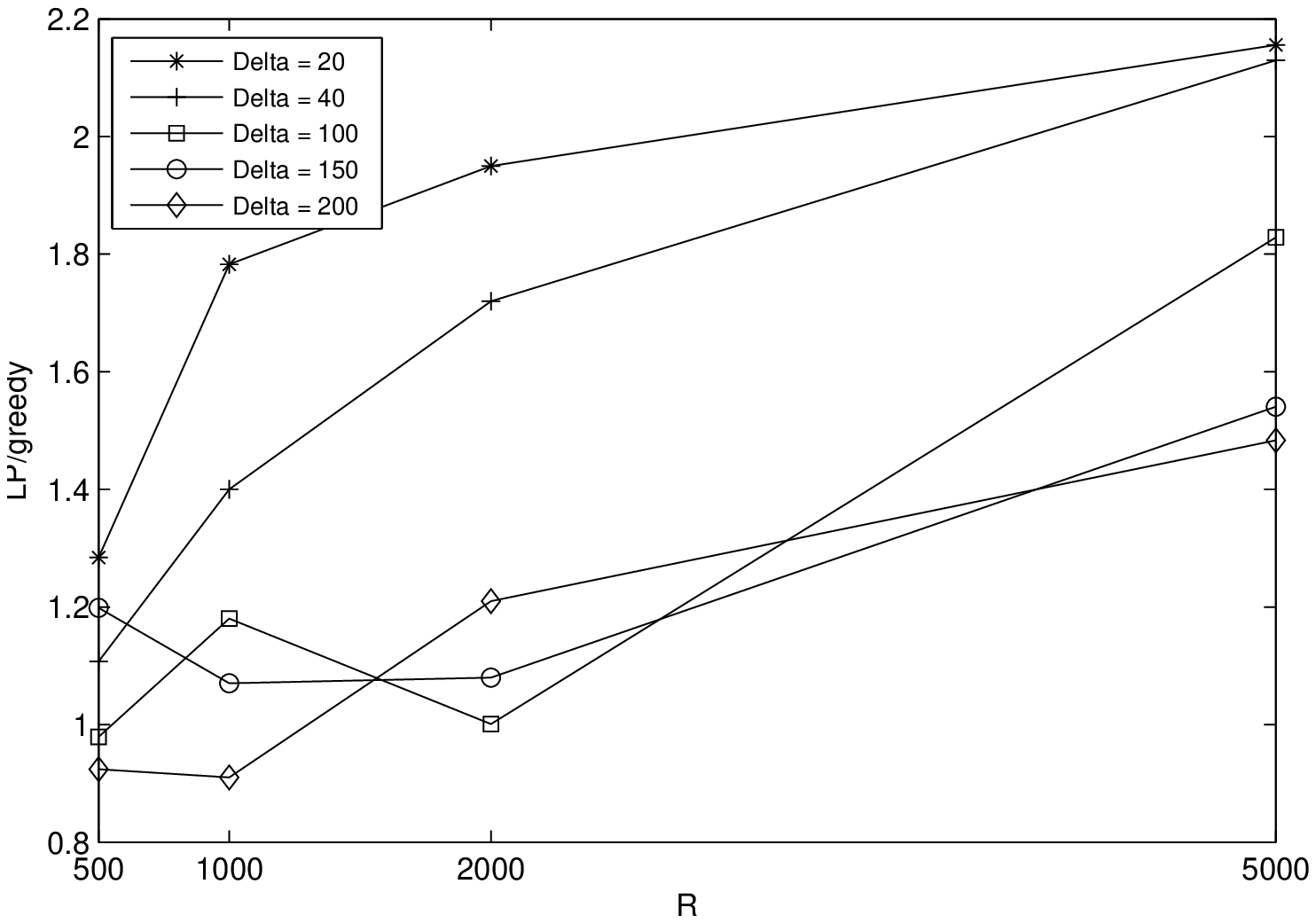}
\caption{Simulation with linear power and $n = 400$ (ordinary weight assignment). Individual lines refer to different values of $\Delta$. The ratio of the solution from the LP algorithm to the greedy algorithm is plotted on the Y-axis against the density.} 
\label{fig:linstraight400}
\end{center}
\end{figure}

\begin{figure}
\begin{center}
\includegraphics[height=2in]{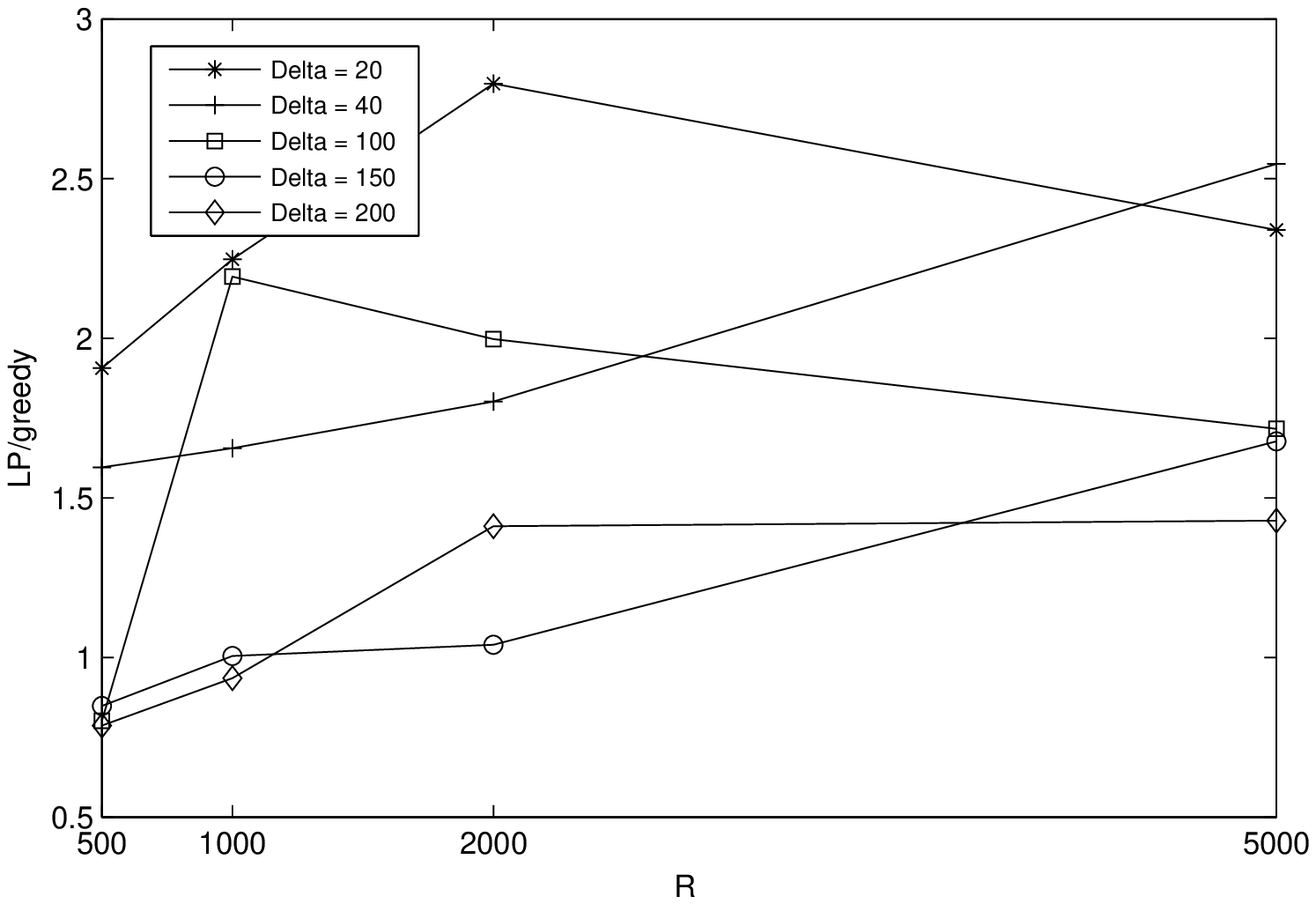}
\caption{Simulation with linear power and $n = 600$ (ordinary weight assignment). } 
\label{fig:linstraight600}
\end{center}
\end{figure}

\begin{figure}
\begin{center}
\includegraphics[height=2in]{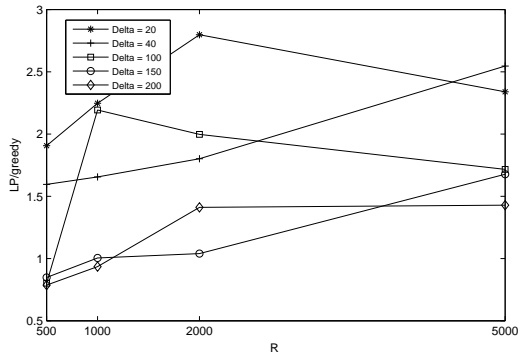}
\caption{Simulation with linear power and $n = 600$ and {\bf Reversed} weight assignment. } 
\label{fig:linreverse600}
\end{center}
\end{figure}

\begin{figure}
\begin{center}
\includegraphics[height=2in]{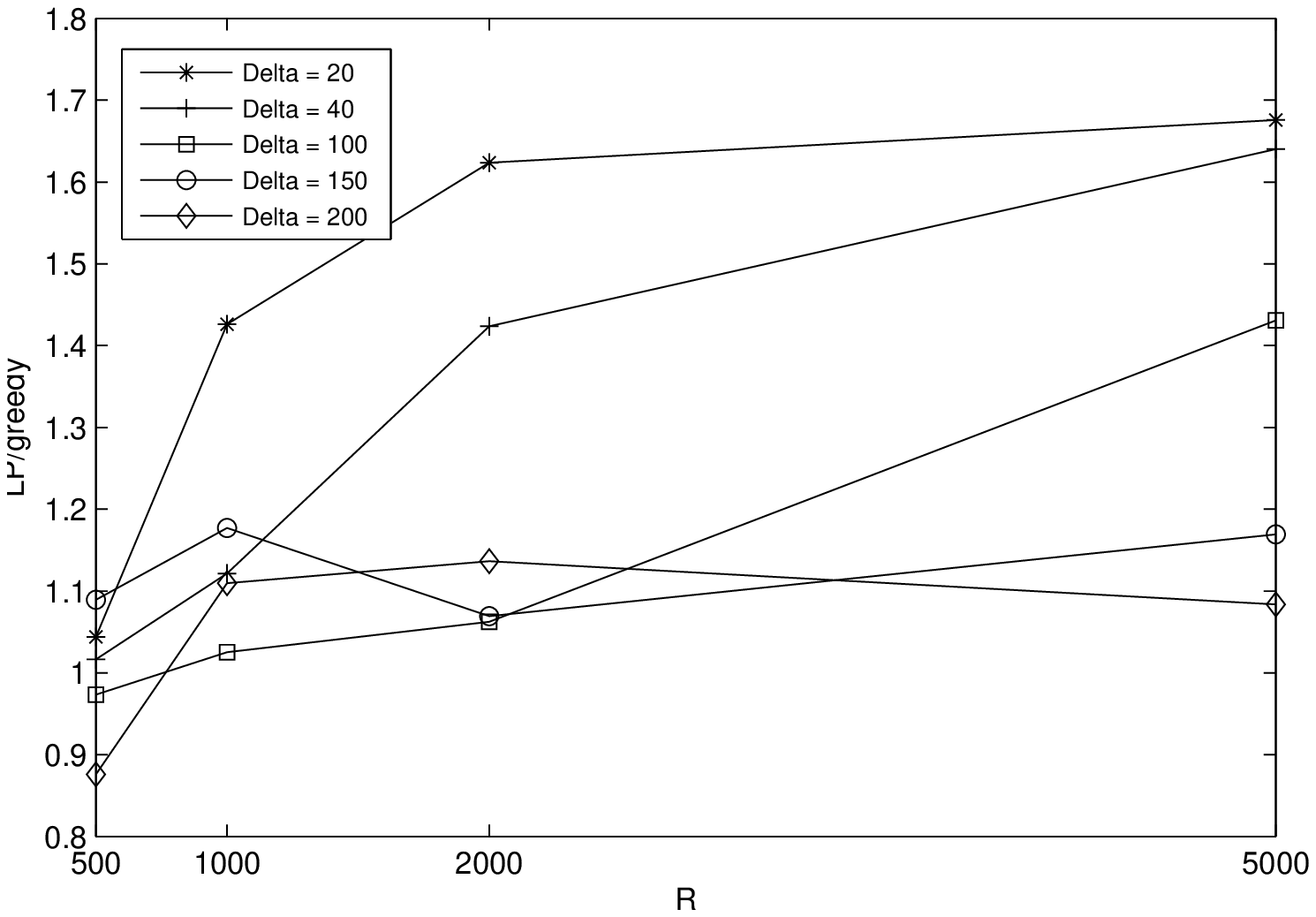}
\caption{Simulation with uniform power and $n = 400$ and the {\bf ordinary} weight assignment. } 
\label{fig:unistraight400}
\end{center}
\end{figure}

In Fig. \ref{fig:linstraight400}, we see the results for linear power with $400$ links. On Y-axis is plotted $\frac{SLP}{SG}$, where
$SLP$ and $SG$ are, respectively, the quality of the solution found from the linear programming algorithm and the greedy algorithm.
As alluded before, the greedy algorithm does better when $\Delta$ and density are both large (these are the points for which the Y-axis value is lesser than $1$), with the trend reversing when these change. 
Running the same experiment run for an increased number of links $n = 600$ confirms these trends (Fig. \ref{fig:linstraight600}).

We experimented with different distributions on the weights, to see if changes here change the solution trend significantly. We tried the following weight distributions.

\begin{itemize}
\item {\bf Reversed}: Set the weight of link $v$ to be $1/w_v$ where $w_v$ is chosen according to the ordinary distribution.
\item {\bf Length determined}: Set weight of the link to be equal to its length.
\item {\bf Weight class}: Choose a parameter $t$ randomly from $[1, \log n]$ and set weight to $2^t$.
\end{itemize}
The overall trend is similar. For reversed and  length determined distributions, LP did extremely well, whereas for the case
of weight class distribution, greedy did much better, with LP only barely outperforming it in a few cases. This further points to
the benefit of combining these algorithms in practice. The results for the reversed case are shown in Fig. \ref{fig:linreverse600}.

Next we experimented with uniform power. As we discussed in Section \ref{sec:weighted}, for uniform power we can only claim a poly-logarithmic approximation factor. However, the bounds are only so bad on rather pathological instances and one needs to do some work to come up with them. Thus in practice, it is reasonable to assume that an LP approach will be not without benefit.
  This is indeed borne out by our experiments, as seen in Fig. \ref{fig:unistraight400}.

\subsection*{Acknowledgements}
Research partially funded by grant 90032021 and grant-of-excellence 120032011 from the Icelandic Research Fund.
Authors thank Neal Young for helpful discussions.

\vspace*{20pt}

\bibliographystyle{plain}
\bibliography{references}

\begin{thebibliography}{10}

\bibitem{infocom11}
E.~\'Asgeirsson and P.~Mitra.
\newblock On a game theoretic approach to capacity maximization in wireless
  networks.
\newblock In {\em INFOCOM}, 2011.

\bibitem{Bahl:2009:WSN:1592568.1592573}
Paramvir Bahl, Ranveer Chandra, Thomas Moscibroda, Rohan Murty, and Matt Welsh.
\newblock White space networking with {Wi-Fi} like connectivity.
\newblock In {\em Proceedings of the ACM SIGCOMM 2009 conference on Data
  communication}, SIGCOMM '09, pages 27--38, New York, NY, USA, 2009. ACM.

\bibitem{chafekar07}
D.~Chafekar, V.S. Kumar, M.~Marathe, S.~Parthasarathy, and A.~Srinivasan.
\newblock {Cross-layer Latency Minimization for Wireless Networks using SINR
  Constraints}.
\newblock In {\em Mobihoc}, 2007.

\bibitem{CKMPS08}
D.~Chafekar, V.S.A. Kumar, M.V. Marathe, S.~Parthasarathy, and A.~Srinivasan.
\newblock { Approximation Algorithms for Computing Capacity of Wireless
  Networks with SINR Constraints}.
\newblock In {\em Infocom}, 2008.

\bibitem{ChiangSurvey}
Mung Chiang, Prashanth Hande, Tian Lan, and Chee~Wei Tan.
\newblock Power control in wireless cellular networks.
\newblock {\em Foundations and Trends in Networking}, 2(4):381--533, 2008.

\bibitem{FKV09}
Alexander Fangh\"anel, Thomas Kesselheim, and Berhold V\"ocking.
\newblock Improved algorithms for latency minimization in wireless networks.
\newblock In {\em ICALP}, pages 447--458, July 2009.

\bibitem{GoldsmithSurvey}
A.~Goldsmith and S.~B. Wicker.
\newblock Design challenges for energy-constrained ad hoc wireless networks.
\newblock {\em IEEE Wireless Communications Magazine}, 9(4):8--27, 2002.

\bibitem{GHWW09}
O.~Goussevskaia, M.~Halld\'{o}rsson, R.~Wattenhofer, and E.~Welzl.
\newblock {Capacity of Arbitrary Wireless Networks}.
\newblock In {\em INFOCOM}, pages 1872--1880, April 2009.

\bibitem{Goussevskaia2008Complexity}
Olga Goussevskaia and Roger Wattenhofer.
\newblock {Complexity of scheduling with analog network coding}.
\newblock In {\em {FOWANC}}, May 2008.

\bibitem{cvx}
M.~Grant and S.~Boyd.
\newblock {CVX}: Matlab software for disciplined convex programming, version
  1.21.
\newblock \url{http://cvxr.com/cvx}, April 2011.

\bibitem{GronkMibiHoc01}
Jimmi Gr{\"o}nkvist and Anders Hansson.
\newblock {Comparison between graph-based and interference-based STDMA
  scheduling}.
\newblock In {\em {Mobihoc}}, pages 255--258, 2001.

\bibitem{kumar00}
P.~Gupta and P.~R. Kumar.
\newblock {The Capacity of Wireless Networks}.
\newblock {\em IEEE Trans. Information Theory}, 46(2):388--404, 2000.

\bibitem{us:esa09full}
M.~Halld\'{o}rsson.
\newblock Wireless scheduling with power control.
\newblock \url{http://arxiv.org/abs/1010.3427}, September 2010. Earlier version
  appears in \emph{ESA '09}.

\bibitem{icalp11}
M.~Halld\'{o}rsson and P.~Mitra.
\newblock {Nearly Optimal Bounds for Distributed Wireless Scheduling in the
  SINR Model}.
\newblock In {\em ICALP}, 2011.

\bibitem{HW09}
M.~Halld\'{o}rsson and R.~Wattenhofer.
\newblock {Wireless Communication is in APX}.
\newblock In {\em ICALP}, pages 525--536, July 2009.

\bibitem{SODA11}
Magn\'{u}s~M. Halld\'{o}rsson and Pradipta Mitra.
\newblock {Wireless Capacity with Oblivious Power in General Metrics}.
\newblock In {\em SODA}, 2011.

\bibitem{hoeferspaa}
Martin Hoefer, Thomas Kesselheim, and Berthold V\"ocking.
\newblock Approximation algorithms for secondary spectrum auctions.
\newblock In {\em Proc. 23rd Symp. Parallelism in Algorithms and Architectures
  (SPAA 2011)}, pages 177--186, 2011.

\bibitem{hoeffding1963}
W.~Hoeffding.
\newblock Probability inequalities for sums of bounded random variables.
\newblock {\em Journal of the American Statistical Association}, 58(301):pp.
  13--30, 1963.

\bibitem{jafarjournal}
S.A. Jafar and S.~Srinivasa.
\newblock {Capacity limits of cognitive radio with distributed and dynamic
  spectral activity}.
\newblock {\em Journal on Selected Areas in Communications}, 25(3), 2007.

\bibitem{bestInfocom08}
Changhee Joo, Xiaojun Lin, and N.B. Shroff.
\newblock { Understanding the Capacity Region of the Greedy Maximal Scheduling
  Algorithm in Multi-Hop Wireless Networks}.
\newblock In {\em INFOCOM}, 2008.

\bibitem{KesselheimSoda11}
T.~Kesselheim.
\newblock {A Constant-Factor Approximation for Wireless Capacity Maximization
  with Power Control in the {S}{I}{N}{R} Model}.
\newblock In {\em SODA}, 2011.

\bibitem{KV10}
T.~Kesselheim and B.~V\"ocking.
\newblock Distributed contention resolution in wireless networks.
\newblock In {\em DISC}, pages 163--178, August 2010.

\bibitem{lqfmobihoc}
Long~B. Le, Eytan Modiano, Changhee Joo, and Ness~B. Shroff.
\newblock {Longest-queue-first scheduling under SINR interference model}.
\newblock In {\em MOBIHOC}, 2010.

\bibitem{Levorato:2009:CIM:1793974.1793991}
Marco Levorato, Urbashi Mitra, and Michele Zorzi.
\newblock Cognitive interference management in retransmission-based wireless
  networks.
\newblock In {\em Proceedings of the 47th annual Allerton conference on
  Communication, control, and computing}, Allerton'09, pages 94--101,
  Piscataway, NJ, USA, 2009. IEEE Press.

\bibitem{DBLP:conf/mobihoc/LiBX09}
Bo~Li, Cem Boyaci, and Ye~Xia.
\newblock A refined performance characterization of longest-queue-first policy
  in wireless networks.
\newblock In {\em MobiHoc}, pages 65--74, 2009.

\bibitem{MaheshwariJD08}
Ritesh Maheshwari, Shweta Jain, and Samir~R. Das.
\newblock A measurement study of interference modeling and scheduling in
  low-power wireless networks.
\newblock In {\em SenSys}, pages 141--154, 2008.

\bibitem{MoWa06}
T.~Moscibroda and R.~Wattenhofer.
\newblock {The Complexity of Connectivity in Wireless Networks}.
\newblock In {\em INFOCOM}, 2006.

\bibitem{MoscibrodaOW07}
Thomas Moscibroda, Yvonne~Anne Oswald, and Roger Wattenhofer.
\newblock How optimal are wireless scheduling protocols?
\newblock In {\em Infocom}, pages 1433--1441, 2007.

\bibitem{Moscibroda2006Protocol}
Thomas Moscibroda, Roger Wattenhofer, and Yves Weber.
\newblock {Protocol Design Beyond Graph-Based Models}.
\newblock In {\em {Hotnets}}, November 2006.

\bibitem{moscibroda06b}
Thomas Moscibroda, Roger Wattenhofer, and Aaron Zollinger.
\newblock {Topology Control meets SINR: The Scheduling Complexity of Arbitrary
  Topologies}.
\newblock In {\em MOBIHOC}, pages 310--321, 2006.

\bibitem{DBLP:conf/infocom/SharmaMS06}
Gaurav Sharma, Ravi Mazumdar, and Ness~B. Shroff.
\newblock Delay and capacity trade-offs in mobile ad hoc networks: A global
  perspective.
\newblock In {\em INFOCOM}. IEEE, 2006.

\bibitem{shiicccn}
Y.~Shi, C.~Jiang, Y.~Thomas Hou, and S.~Kompella.
\newblock On capacity scaling law of cognitive radio ad hoc networks.
\newblock In {\em Proc. IEEE International Conference on Computer Communication
  Networks (ICCCN)}, 2011.

\bibitem{SuZhangCognitive}
H.~Su and X.~Zhang.
\newblock Cross-layer based opportunistic mac protocols for qos provisionings
  over cognitive radio wireless networks.
\newblock {\em IEEE Journal on Selected Areas in Communications},
  26(1):118--129, 2008.

\bibitem{TE92}
L.~Tassiulas and A.~Ephremides.
\newblock Stability properties of constrained queueing systems and scheduling
  policies for maximum throughput in multihop radio networks.
\newblock {\em IEEE Trans.~Automat.~Contr.}, 37(12):1936--1948, 1992.

\bibitem{wanwireless}
P.J. Wan, O.~Frieder, X.~Jia, F.~Yao, X.~Xu, and S.~Tang.
\newblock Wireless link scheduling under physical interference model.
\newblock In {\em INFOCOM}, 2011.

\bibitem{WuBertsekas}
C.~C. Wu and D.~P. Bertsekas.
\newblock Admission control for wireless networks.
\newblock {\em IEEE Trans. on Vehicular Technology}, 50:504--514, 2001.
\newblock \url{http://web.mit.edu/dimitrib/www/Adcontrol.pdf}.

\end{thebibliography}

\appendix

We give a proof of Lemma \ref{cl2}, originally due \cite{KV10}, that holds
also in the presence of arbitrary noise.

\emph{Lemma \ref{cl2}:}
If $L$ is $\gamma$-feasible using a non-decreasing sublinear power
assignment and  $u$ is a link such that $\ell_u \leq \ell_v$ for all $v \in L$, then $a_{L}(u) = O(\gamma)$.
\smallskip

\begin{proof}
Assume that $L$ is a $1/3^\alpha$-feasible set. By the signal
strengthening (Lemma \ref{lem:signal}), this affects only the constant factor.

Consider the link $v \in L$ such that $d(r_u, r_v)$ is minimum.
Also consider the link $w \in L$ with $d(s_{w}, r_u)$ minimum.
Let $D = d(r_u, r_v)$. We claim that for all links $x$ in $L$, $x \ne w$,
\begin{equation}
d(s_x, r_u) \geq \frac{1}{2} D \ .
\label{eqn:dist2}
\end{equation}
To prove this, assume, for contradiction, that $d(s_x, r_u) < \frac{1}{2} D$. 
Then,  $d(s_w, r_u)  < \frac{1}{2} D$, by definition of $w$. 
Now, again by the definition of $v$, $d(r_{x}, r_u) \geq D$ 
and $d(r_w, r_u) \geq D$. 
Thus $\ell_w \geq d(r_u, r_w) - d(s_w, r_u) > \frac{D}{2}$
and similarly $\ell_{x} > \frac{D}{2}$. 
On the other hand $d(s_{w}, s_{x}) \le d(s_w, r_u) + d(s_x, r_u) < 
   \frac{D}{2} + \frac{D}{2} < D$.
Now, $d_{w x} \cdot d_{x w} 
  \leq (\ell_{x} + d(s_{w}, s_{x}))(\ell_w + d(s_w, s_{x})) 
  < (\ell_x + D)(\ell_w + D)
  < 9 \ell_{w} \ell_x$, contradicting the following:
  
  \begin{lemma}[\cite{us:esa09full}]
Let $u, v$ be links in a $1/q^\alpha$-feasible set.
Then, $d_{uv} \cdot d_{vu} \ge q^2 \cdot \ell_u \ell_v$. 
\label{lem:ind-separation}
\end{lemma}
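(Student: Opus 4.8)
The plan is to read the bound directly off the two one-sided affectance inequalities that $1/q^\alpha$-feasibility forces between $u$ and $v$, and then multiply them so that the (arbitrary) power ratio $P_u/P_v$ cancels.

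First I would note that since $u$ and $v$ both belong to the $1/q^\alpha$-feasible set, the total affectance on each of them is at most $1/q^\alpha$; dropping all but one summand gives $a_v(u) \le 1/q^\alpha$ and $a_u(v) \le 1/q^\alpha$. We may assume $q > 1$ (otherwise $1/q^\alpha \ge 1$ and the hypothesis carries no information), so $1/q^\alpha < 1$ and the thresholding $\min\{1,\cdot\}$ in the definition of affectance is inactive for these two terms. Hence
\[
c_u\,\frac{P_v}{P_u}\left(\frac{\ell_u}{d_{vu}}\right)^{\alpha} \le \frac{1}{q^\alpha}
\qquad\text{and}\qquad
c_v\,\frac{P_u}{P_v}\left(\frac{\ell_v}{d_{uv}}\right)^{\alpha} \le \frac{1}{q^\alpha}.
\]

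Next I would solve each inequality for the distance appearing in it: the first gives $d_{vu} \ge q\,c_u^{1/\alpha}\,(P_v/P_u)^{1/\alpha}\,\ell_u$, and the second gives $d_{uv} \ge q\,c_v^{1/\alpha}\,(P_u/P_v)^{1/\alpha}\,\ell_v$. Multiplying these two bounds, the factors $(P_v/P_u)^{1/\alpha}$ and $(P_u/P_v)^{1/\alpha}$ cancel and I obtain
\[
d_{uv}\cdot d_{vu} \;\ge\; q^2\,(c_u c_v)^{1/\alpha}\,\ell_u\,\ell_v \;\ge\; q^2\,\ell_u\,\ell_v,
\]
where the final inequality uses $c_x = \beta/(1 - \beta N_x \ell_x^\alpha/P_x) \ge \beta \ge 1$: the denominator is positive for any link that can be feasible at all, and $\beta \ge 1$ is the standing assumption (with zero noise, $c_x = \beta$). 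This is exactly the claim.

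The computation is short and there is no genuine obstacle; the only things to be careful about are (i) verifying that the thresholding minimum plays no role, which is why the statement is phrased in terms of $1/q^\alpha$ rather than $1$, and (ii) the harmless bookkeeping of the constants $c_u, c_v$, which can only help. The conceptual point worth flagging is that it is precisely the symmetric use of affectance in both directions that renders the path-loss exponent $\alpha$ and the particular length-based power assignment irrelevant to the final separation bound.
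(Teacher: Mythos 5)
Your proof is correct and follows essentially the standard argument for this separation bound (the paper itself gives no proof, only the citation to \cite{us:esa09full}): extract the two pairwise affectance bounds that feasibility forces between $u$ and $v$, invert each for the corresponding cross-distance, and multiply so that the power ratio $P_u/P_v$ cancels. The two caveats you flag --- that the thresholding $\min\{1,\cdot\}$ is inactive because $1/q^\alpha<1$, and that $c_u c_v\ge 1$ (which needs the standing assumption $\beta\ge 1$ together with positivity of the denominator in $c_x$) --- are exactly the right ones, and both are standard in this setting.
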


Consider now any link $x$ in $L$, $x \ne w$.
By the triangle inquality and Eqn.~\ref{eqn:dist2}, 
$d_{x v}  = d(s_x, r_v) \leq d(r_v, r_u) + d(s_x, r_u) = D + d(s_x, r_u)
   \leq 3 d(s_x, r_u) = 3 d_{x u}$. 
Now $a_{x}(u) \leq c_u \frac{P_{x}}{d_{x u}^{\alpha}} \frac{\ell_u^{\alpha}}{P_u}$. 
Since $\ell_u \leq \ell_v$, it holds that 
$c_u \le c_v$ and by sub-linearity it holds that
$P_u/\ell_u^\alpha \ge P_v/\ell_v^\alpha$.
Thus, 
\begin{align}
a_{x}(u) \leq c_v \frac{P_x}{d_{xu}^{\alpha}} \frac{\ell_v^{\alpha}}{P_v} 
   \leq c_v \frac{3^{\alpha} P_x}{d_{xv}^{\alpha}} \frac{\ell_v^{\alpha}}{P_v} 
  = 3^{\alpha} a_v(x),
\label{eq:affs1}
\end{align}
where the final equality follows from the feasibility of $L$.
Finally, summing over all links in $L$
\begin{align*}
a_L(u) & = \sum_{x \in L}  a_{x}(u)  = a_{w}(u) + \sum_{x \in L \setminus \{w\}}  a_{x}(u)\\
 & \leq 1 + 3^\alpha \sum_{x \in L \setminus \{w\}}  a_u(v) 
  \leq 1 + 3^\alpha \cdot \gamma = O(1) \ , 
\end{align*}
since $\sum_{x \in L \setminus \{w\}}  a_{x}(u) \leq a_L(u) \leq \gamma$ by assumption.
\end{proof}

\medskip

\emph{Proofs of other lemmas:} When using linear power, it holds for all links $u$ and $v$ that
$c_u = c_v$ and the signal received $P_u/\ell_u^\alpha = P_v/\ell_v^\alpha$,
satisfying Eqn.~\ref{eq:affs1} holds without the condition $\ell_u \le \ell_v$.
This yields Lemma \ref{cl4}.

By inverting the role of senders and receivers, we can obtain similar
bounds on out-affectance ($a_u(L)$) as above on in-affectance ($a_L(u)$).
Modulo this change, the proof of Lemma \ref{cl3} is nearly identical
to the above proof of Lemma \ref{cl2}, and the argument for Lemma
\ref{cl2uni} mirrors that of Lemma \ref{cl4}.

\end{document}